\documentclass[12 pt, onecolumn, draftclsnofoot]{IEEEtran}

%%%%%%%%%%%%%% PACKAGES %%%%%%%%%%%%%%%%%%%%%%%%%%%
\usepackage{setspace}
\usepackage[noadjust]{cite}
\usepackage{multirow}
\usepackage{dsfont}
\usepackage[latin1]{inputenc}
\usepackage{times}
\usepackage [spanish,english] {babel}
\usepackage [T1]{fontenc}
\usepackage [latin1]{inputenc}
\usepackage[dvips]{graphicx}
\usepackage{grffile}
\usepackage{amssymb,amsmath,amsthm, amsfonts}
\usepackage{xspace}
\usepackage{cases}
\usepackage{lineno}

%%%%%%%%%%%%% ALGORITHMS  %%%%%%%%%%%%%%%%%%%%%%%%
\usepackage{algpseudocode}
\usepackage{algorithm}
\usepackage[nolist]{acronym}

%For changes control
\usepackage[usenames,dvipsnames]{color}

%For yes and no ticks
\usepackage{pifont}

%\usepackage[nolists]{endfloat}
%\DeclareDelayedFloat{algorithm}{Algorithm} 

%%%%%%%% THEOREMS AND PROOFS %%%%%%%%%%%%
%\newtheorem{prp}{Proposition}
%
%\newtheorem{cex}{Counterexample}

%%%%%%%% THEOREMS AND PROOFS %%%%%%%%%%%%
\newtheorem{lemma}{Lemma}
\newtheorem{corollary}{Corollary}
\newtheorem{definition}{Definition}
\newtheorem{problem}{Problem}
\newtheorem{theorem}{Theorem}

\theoremstyle{remark}

\newtheorem{proofPart}{Part}

\newcommand{\DDC}[1][]{\ensuremath{D^{#1}(t)}}
\newcommand{\EEC}[1][]{\ensuremath{E^{#1}(t)}}
\newcommand{\acBatery}[2][t]{\ensuremath{B_A(#1;#2)}}
\newcommand{\acBateryIt}{\ensuremath{B_A^{(m)}(t)}}
\newcommand{\acData}[1][t]{\ensuremath{D_A(#1)}}
\newcommand{\acDataIt}{\ensuremath{D_A^{(m)}(t)}}
\newcommand{\minEExp}[2][t]{\ensuremath{E_{min}(#1;#2)}}
\newcommand{\minEExpIt}{\ensuremath{E_{min}^{(m)}(t)}}
\newcommand{\minDDep}[1][t]{\ensuremath{D_{QoS}(#1)}}
\newcommand{\minDDepIt}[1][t]{\ensuremath{D_{QoS}^{(m)}(#1)}}

\newcommand{\DArrivalT}[1][]{\ensuremath{d_i^{#1}}}
\newcommand{\EArrivalT}[1][]{\ensuremath{e_j^{#1}}}
\newcommand{\QArrivalT}[1][]{\ensuremath{q_k^{#1}}}
\newcommand{\DArrival}[1][]{\ensuremath{D_i^{#1}}}
\newcommand{\EArrival}[1][]{\ensuremath{E_j^{#1}}}
\newcommand{\QArrival}[1][]{\ensuremath{Q_k^{#1}}}

%AWGN Channel

\newcommand{\noiseVar}[1][i]{\sigma^2}

%directional water-filling

%Energy Harvesting

%Referimg to processing cost

%Refering to the algorithms

%\def \foo [#1]#2{The first argument is ``#1'', the second one is ``#2''}
%\def \ui {$u_i$}
%\def \si {$s_j$}
%\def \uie [#1]{$u_i^#1$}
%\def \r [#1]{$r^#1$}
\def \ie{, i.e.,\xspace}
\def \eg{, e.g.,\xspace}

%Acronyms
\def \qos[#1]{\ac{QoS}#1}
\def \mimo[#1]{\ac{MIMO}#1}
\def \siso[#1]{\ac{SISO}#1}
\def \awgn[#1]{\ac{AWGN}#1}
\def \mmse[#1]{\ac{MMSE}#1}
%\def \{\ac{AWGN} }

%Precoders

\newcommand{\Nt}[1][n]{\ensuremath{N_t}}
\newcommand{\Nr}[1][n]{\ensuremath{N_r}}

%Channel

%% Computational complexity of algorithms
%Calls to MWFA

%ComputationalComplexity of the MWFA

%% Processing cost
\newcommand{\DomainX}[1][]{\ensuremath{\mathcal D_{\vec x}}}

\newcommand{\Ts}[1][j]{\ensuremath{T_s}\xspace}

%%%%%%%%% SHORTENED DEFINITIONS %%%%%%%%%%%%%%%
% Useful to write less
\def\mb{\mathbf}

%%%%%%%% SCALAR DEFINITIONS %%%%%%%%%%

%%%%%%%% VECTOR DEFINITIONS %%%%%%%%%%
\def\vec#1{\mb{#1}}
 % random vector
 % random vector realization

%%%%%%%% MATRIX DEFINITIONS %%%%%%%%%%

%%%%%%%%% DIMENSIONS %%%%%%%%%%%%%%%%%%

%%%%%%%% FIELDS DEFINITIONS %%%%%%%%%%

 %\Omega

%%%%%%%% MISC MATHEMATICS %%%%%%%%%%%%%%%

%\def\H{\dagger}

\def\d{\textrm{d}}

\begin{document}

%\addtolength{\abovedisplayskip}{-0.35cm}
%\addtolength{\belowdisplayskip}{-0.35cm}
%Extra spacing above a mathematical display style formula
%\addtolength{\abovedisplayshortskip}{-0.30cm}
%In math mode, set the vertical space below a display style math formula, if the following paragraph does not overlap the fomula part.
%\addtolength{\belowdisplayshortskip}{-0.30cm}

%\pagenumbering{arabic}
%\linenumbers

%\usepackage{fullpage}
%\pagestyle{empty}

%\topsep: space between first item and preceding paragraph.
%\addtolength{\topsep}{-0.3cm}
%Vertical space around a heading depends on \parskip, \baselineskip and other parameters that LaTeX 
%\addtolength{\parskip}{-0.15cm}

%\addtolength{\textfloatsep}{-0.8cm}

%	\addtolength{\topmargin}{-.05 cm}
%	\addtolength{\textheight}{0.1 cm}

%\modulolinenumbers[5]
%\linenumbers

\begin{acronym}[OFDMA]

\acro{AWGN}{Additive White Gaussian Noise}
\acro{BCC}{Battery Capacity Constraint}
\acro{BER}{Bit Error Rate}
\acro{CSI}{Channel State Information}
\acro{CWF}{Classical Waterfilling}
\acro{DCC}{Data Causality Constraint}
\acro{DWF}{Directional Water-Filling}
\acro{EBS}{Empty Buffers Strategy}
\acro{ECC}{Energy Causality Constraint}
\acro{FSA}{Forward Search Algorithm}
\acro{ICT}{Information and Communications Technology}
\acro{ICTs}{Information and Communications Technologies}
\acro{IFFT}{Inverse Fast Fourier Transform}
\acro{i.i.d.}{independent and identically distributed}
\acro{ISO}{International Standards Organization}
\acro{ISS}{Incremental Slot Selection}
\acro{LT}{Luby Transform}
\acro{MAC}{Medium Access Control}
\acro{MAP}{Maximum a Posteriori}
\acro{MFSK}{Multiple Frequency-Shift Keying}
\acro{MGSS}{Maximum Gain Slot Selection}
\acro{MI}{Mutual Information}
\acro{MIMO}{Multiple-Input Multiple-Output}
\acro{ML}{Maximum Likelihood}
\acro{MMSE}{Minimum Mean-Square Error}
\acro{NDA}{Non Decreasing water level Algorithm}
\acro{OFDM}{Orthogonal Frequency Division Multiplexing}
\acro{OFDMA}{Orthogonal Frequency Division Multiple Access}
\acro{OSI}{Open System Interconnection}
\acro{PbP}{Pool-by-Pool}
\acro{QoS}{Quality of Service}
\acro{SISO}{Single-Input Single-Output}
\acro{SNR}{Signal to Noise Ratio}
\acro{SVD}{Singular Value Decomposition}
\acro{UPA}{Uniform Power Allocation}
\acro{WEHN}{Wireless Energy Harvesting Node}
\acro{WER}{Word Error Rate}
\acro{WF}{Waterfilling}
\acro{WFlow}{Water-Flowing}
\acro{w.r.t.}{with respect to}
\acro{WSNs}{Wireless Sensor Networks}
\acro{WSS}{Weighted Slot Selection}
\end{acronym}

\title{Energy-efficient transmission for wireless energy harvesting nodes}
%\author{\IEEEauthorblockN{Maria Gregori and Miquel Payaró}\\
%\IEEEauthorblockA{Centre Tecnol\`ogic de Telecomunicacions de Catalunya (CTTC)\\
%08860 - Castelldefels, Barcelona, Spain %. MP: He tret el punt
%\\	
%E-mails: \{maria.gregori, miquel.payaro\}@cttc.cat }

\author{\IEEEauthorblockN{Maria Gregori and Miquel Payaró}\\
\IEEEauthorblockA{Centre Tecnol\`ogic de Telecomunicacions de Catalunya (CTTC)\\
E-mails: \{maria.gregori, miquel.payaro\}@cttc.cat\thanks{
This work was partially supported by: the Catalan Government under grants 2009SGR 1046 and
2011FI\_B 00956; the Spanish Ministry of Economy and Competitiveness under project TEC2011-29006-C03-01 (GRE3N-PHY); and the EC under project Network of Excellence in Wireless Communications (Newcom\#, Grant Agreement 318306).}}

% MP: Afegit%%%
\vspace{-0.7cm}
%%%%%%%%%%%%%%%
}
\maketitle
\thispagestyle{empty}
\doublespacing

\begin{abstract}

Energy harvesting is increasingly gaining importance
as a means to charge battery powered devices such as sensor
nodes. Efficient transmission strategies must be developed for \acp{WEHN} that take
into account both the availability of energy and data in the node. We consider a scenario where data and energy packets arrive to the node where the time instants and amounts of the packets are known (offline approach).
In this paper, the best data transmission strategy is found for a finite battery capacity \ac{WEHN} that has to fulfill some \ac{QoS} constraints, as well as the energy and data causality constraints. As a result of our analysis, we can state that losing energy due to overflows of the battery is inefficient unless there is no more data to transmit and that the problem may not have a feasible solution. Finally, an
algorithm that computes the data transmission curve minimizing
the total transmission time that satisfies the aforementioned constraints has been developed.

\end{abstract}

\newpage
\pagenumbering{arabic}

%\vspace{-0.25cm}
\setcounter{page}{1} 
\section{Introduction}

%[Intro] 
% [Intro] Autonomy of batery powered devices

% [Intro] Qos

% [Intro] Literature review
% [Intro] Contributions
%Computational capabilities of nodes are rapidly increasing due to the reduction of transistors' size.
%Consequently, throughput requirements of wireless nodes also increase.
%Unfortunately, battery autonomy is limiting the operational lifetime of autonomous nodes
%as battery capacity is not growing as fast as the energetic demand of battery powered wireless nodes.
%Therefore, energy-efficiency is essential for battery powered devices.

Battery autonomy is limiting the operational lifetime of battery powered wireless nodes due to the  increasing computational capabilities, which follow from the reduction of transistor's size, and the consequent increment of the node throughput requirements. 
As battery capacity is not growing as fast as the energetic demand of battery powered wireless nodes, e.g., handheld devices or sensor nodes, it is essential to make an efficient use of the available energy.

% [Intro] Tradeoff energy delay and QoS (for non-harvesting nodes).
The energy consumption to transmit a certain amount of
data can be reduced by increasing the transmission time or, equivalently, reducing the transmitted power or rate at which the information is sent \cite{zafer_calculus_2009}.
However, there exists a trade-off between energy and transmission delay as in many real scenarios, the data flow must guarantee certain levels of performance.
For instance, the network may require that data is transmitted with a certain minimum rate, quality level or with a maximum delay. These requirements, usually referred to as \ac{QoS} constraints, become crucial in real-time applications that are sensitive to the network delay.
% \ac{QoS}  constraints also allow to assign different priorities to the packets depending on the type of traffic contained in each packet. 

% [Intro] Literature review NON HARVESTING
The trade-off between energy and transmission delay was studied in different
works \cite{prabhakar_energy-efficient_2001, zafer_delay-constrained_2007, zafer_calculus_2009,
uysal-biyikoglu_adaptive_2002, uysal_energy-efficient_2002}, where the objective was to minimize energy
consumption while satisfying \ac{QoS} constraints. Specially remarkable is the work by Zafer et al. in \cite{zafer_calculus_2009}, as they proposed a novel solution that used the concept of cumulative curves, which allows a graphical and simple
visualization of the solution.
% to the problem of  minimizing energy consumption while satisfying \ac{QoS} constraints.

% [Intro] Energy Harvesting Offline and Online approaches.
Within this context, energy harvesting\ie the process of collecting energy from the environment by different means\eg a solar cell or a piezoelectric generator \cite{Paradiso_EH_wirelessElec_2005, Chalasani_Survey_2008}, is becoming an appealing solution to enlarge the autonomy of battery powered devices.
Energy harvesting has opened a new research paradigm on the design of energy-efficient transmission strategies as the transmission strategies for non-harvesting nodes are no longer optimal.
Many works\eg  \cite{ yang_optimalPacket_2010,tutuncuoglu_optimum_2010, ho_optimal_2011,Ozel_Ulukus_Yener_TX_Fading_Optimal_2011,Yang_Broadcasting_2010,Akif_Broadcasting_2011,ozel_BC_finiteBattery_2012,Ozel_Information_2010,Sharma_Management_EH_2010}, model the  energy harvesting process as a set of energy packets arriving to the node at different time instants and with different amounts of energy.
There exist two well established approaches for the design of optimal transmission strategies, namely,  \textit{online} and \textit{offline}.
The \textit{online} approach assumes that the node only has some statistical knowledge of the dynamics of the energy harvesting process.
The \textit{offline} approach assumes that the node has full knowledge of the amount and arrival time of each energy packet, which is an idealistic situation that provides analytical and intuitive solutions that can be used to gain insight for the later design of the \textit{online} transmission strategy. 

% [Intro] Energy Harvesting State of the art.
By using this packetized model for the energy harvesting process,  \cite{yang_optimalPacket_2010} considered dynamic data packet arrivals
and found the transmission strategy that minimizes the delivery time of all data packets, however, the authors of \cite{yang_optimalPacket_2010} assumed an infinite battery capacity.
In \cite{tutuncuoglu_optimum_2010}, a node with finite battery capacity was studied, however, considering that all the data packets are available from the beginning of the transmission. 
This assumption  significantly simplifies the setup as losing energy due to battery overflows is clearly suboptimal because there is always data to be transmitted.
References \cite{ho_optimal_2011} and  \cite{Ozel_Ulukus_Yener_TX_Fading_Optimal_2011} used convex optimization and dynamic programming to find the best rate scheduling or power allocation strategy under the presence of fading. 
The minimization of the transmission completion time for a \ac{WEHN} operating in a broadcast link was considered in \cite{Yang_Broadcasting_2010,Akif_Broadcasting_2011,ozel_BC_finiteBattery_2012}.
 References \cite{Yang_Broadcasting_2010} and \cite{Akif_Broadcasting_2011} assumed infinite battery capacity, and, the authors of \cite{ozel_BC_finiteBattery_2012} found the rate scheduling policy of the finite battery capacity case. 
However, these works assumed that the data to be transmitted to each of the users is available from the beginning at the transmitter, similarly as in the point-to-point case in \cite{tutuncuoglu_optimum_2010}.
In \cite{Ozel_Information_2010}, the coding problem was studied from an information theory perspective for an energy harvesting node.
Finally, the stability and delay of the data queue were considered in \cite{Sharma_Management_EH_2010} to derive the optimal transmission policies.

% [Intro] Contributions
To the best of our knowledge, there is no work that considers a \ac{WEHN} with finite battery capacity and dynamical data and energy arrivals. 
In a prior work \cite{gregori_globecom}, we studied this problem and showed that as far as battery does not overflow, constant rate transmission is the strategy that requires less energy to transmit a certain amount of data.
However, if battery overflows, transmitting at constant rate is no longer optimal, but the optimal strategy increases the rate before the overflow until either there is no overflow or until all the data is transmitted.
In this paper, we extend the work in \cite{gregori_globecom} to take into account new \ac{QoS} requirements of the different data packets, which was not considered in \cite{gregori_globecom}.
\ac{QoS} constraints substantially complicate the problem as we have constraints both in the data and energy domains. 
In this paper, we propose a framework to map the constraints of the energy domain to the data domain that allows us to adapt the calculus approach proposed in \cite{zafer_calculus_2009}, which did not take into account energy harvesting at the transmitter, to the energy harvesting scenario. Therefore, the main contributions of this paper are:
\begin{itemize}
	\item Studying the impact of the \ac{QoS} constraint in the transmission strategy that minimizes the transmission completion time, which to the best of our knowledge has not been previously considered in the literature. 
	\item Showing that, due to the \ac{QoS} constraints, there may be situations in which no feasible transmission strategy exists. Such situations are analytically characterized in Lemma \ref{L:Solution}.
	\item Showing that if the optimal transmission strategy exists, the optimal cumulative data departure curve is
a peace-wise linear function where battery overflows are only produced when the data buffer is empty.
	\item Developing an algorithm that either computes the optimal transmission strategy or concludes that there is no feasible solution and analytically showing its optimality.
\end{itemize}

% [Intro] Structure
In Section \ref{sec:Problem}, the problem is mathematically formulated  by using the concept of cumulative curves, which is introduced in \cite{zafer_calculus_2009}, that allows an appealing visualization of the solution.
The solution is characterized in Section \ref{sec:QoS}.
Section \ref{sec:algorithm_qos}, presents the developed iterative algorithm that is able to compute the optimal solution by using the characterization performed in Section \ref{sec:QoS}.
Finally, the conclusions of the work are presented in Section \ref{sec:conclusions}.

\vspace{-0.3cm}
\section{Problem formulation}
\label{sec:Problem}
We consider a node with a finite battery capacity, $C_{max}$, that has to
transmit $N$ data packets by using at the most the $J$ energy packets that it harvests over
time while satisfying some \ac{QoS} requirements (see Figure \ref{fig:sysModel_QOS}). We want to find the power allocation/rate scheduling strategy\footnote{Observe that fixing the transmission power or the rate is equivalent as it will be shown next.} that minimizes the transmission completion time, $T$.

\begin{figure}
\centering
\includegraphics[width=0.7\columnwidth]{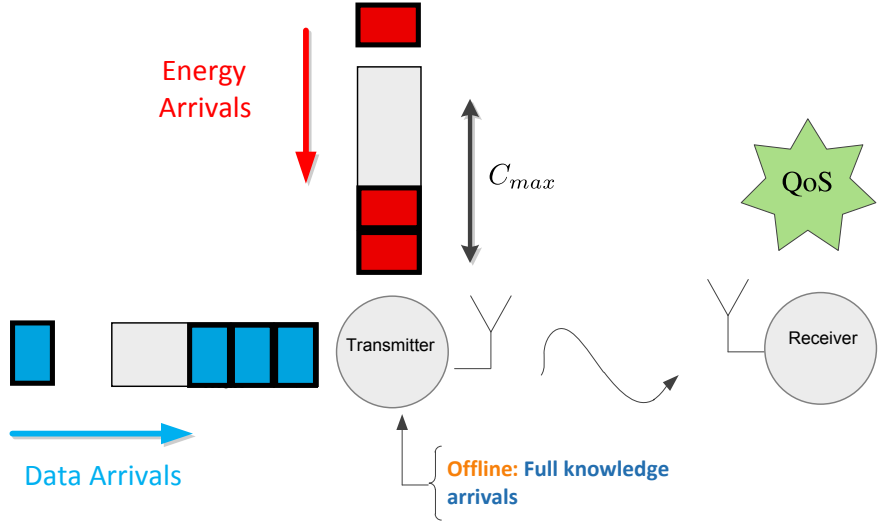}
\caption{System Model.}
\label{fig:sysModel_QOS}
\end{figure}

We assume that the time instants at which the data and energy packets arrive to the
node and their size (bits or Joules) are known from beforehand (\textit{offline} approach). Hence, it is known that at the
time instant $d_i \geq 0$ seconds the $i$-th data packet arrives containing $D_i$ bits,
with $i=0\ ...\ N-1$. Similarly, the $j$-th energy packet arrives at
the instant $e_j \geq 0$ seconds and a total of $E_j$ Joules are harvested, with $j=0\
...\ J-1$ (see Figure \ref{fig:timeDomain}). Without loss of generality, the first energy arrival is produced at $e_0=0$ and contains the initial battery of the node $E_0$.

\begin{figure}
\centering
\includegraphics[width=0.7\columnwidth]{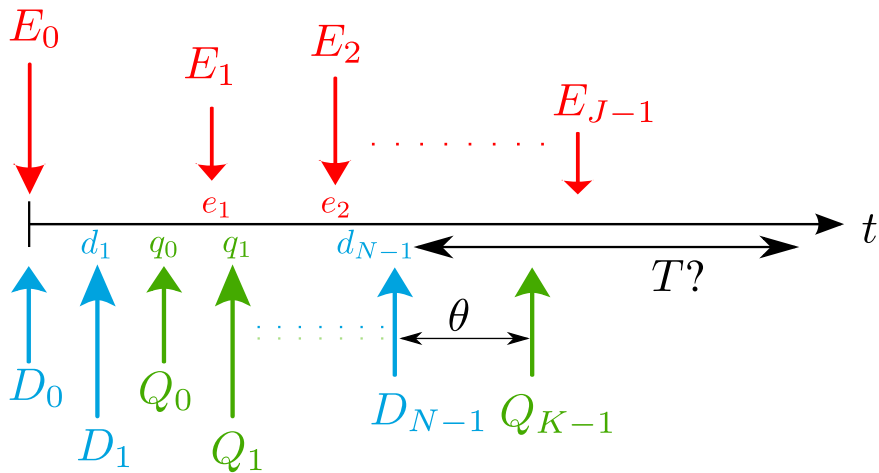}
\caption{Summary of the different events considered in the time domain, namely, energy arrivals, data arrivals, and quality of service requirements. The figure represents the deadline \ac{QoS} constraint, where $\theta$ is the maximum allowed delay for all the packets.}
\label{fig:timeDomain}
\end{figure}

To describe our model we present the following definitions that are summarized in Table \ref{tab:Notation}:

\begin{definition}[Data Departure Curve]
A data departure curve $D(t) \geq 0$, $t \geq 0$, is the total number of bits that have been cumulatively transmitted by the node in the time interval $[0,t]$.
\label{Def:D}
\end{definition}

\begin{definition}[Energy Expenditure Curve]
An energy expenditure curve $E(t) \geq 0$, $t \geq 0 $, is the energy in Joules that has been cumulatively consumed by the node in the time interval $[0,t]$.
\label{Def:E}
\end{definition}

Let us consider a static or slow-fading channel where the power-rate function
$g(\cdot)$, i.e., the function that, at any given time instant $t$, relates the
transmitted power, $P(t)$, with the rate, $r(t)$, according to $P(t) =
g(r(t))$. As in \cite{zafer_calculus_2009} and \cite{yang_optimalPacket_2010}, we make the common assumption that the function $g(\cdot)$ is
time-invariant, convex, monotonically increasing, and only depends on $r(t)$.
Note that the instantaneous rate, $r(t)$, can be expressed as the derivative
with respect to $t$ of the data departure curve, i.e., $r(t) =\frac{\d D(t)}{\d t}$.
Similarly, the transmitted power is $P(t) =\frac{\d E(t)}{\d t}$.
% Similarly,the transmitted power, $P(t)$, is the derivative with respect to
%$t$ of the energy expenditure curve, i.e., $P(t) =E'(t)$.
Then, the energy
expenditure curve can be obtained from the data departure curve as follows\footnote{
Without loss of generality, we can assume that $r(t)$ is right-continuous.}: 
\begin{equation}
E(D(t)) = \int_0^t g\left(\frac{\d D(\tau)}{\d \tau}\right) \: \d \tau.
\label{EQ: EfromD}
\end{equation}

Observe that the magnitudes $D(t)$, $E(t)$, $r(t)$, and $P(t)$ are
unambiguously related by \eqref{EQ: EfromD} and  $g(\cdot)$. Therefore, given
the initial states 
% MP: Canvi
% E(0) = 0 and D(0) = 0, 
$E(0) = 0$ and $D(0) = 0$, 
%%%%%%%%%%%%%%%%%%%%%%%%%%
the design of the system to be optimized can be described by
%using
any of these magnitudes.

\begin{definition}[Battery] 	
The battery of the node $B(t)$ is the amount of energy that the node has
available at a given time instant $t$. We consider a battery with finite
capacity $C_{max}$. Thus, $B(t) $ must satisfy that $0 \leq B(t) \leq C_{max}$,
$\forall t \geq 0$.
\end{definition}

Due to the limited battery capacity, at the $j$-th energy arrival, some part of the harvested energy $E_j$ may be lost. This  lost energy is denoted as the $j$-th battery overflow\ie $O_j =\{ E_j - C_{max} + B(e_j^-) \}^+$, with $\{x\}^+  = \max\{0,x\}$. Observe that battery overflows depend on the chosen energy expenditure curve $E(t)$ and guarantee that the battery level will never be above the battery capacity, $B(e_j^+) = B(e_j^-) + E_j - O_j \leq C_{max}$.

As battery overflows depend on the chosen $E(t)$, their value $O_j$ cannot be computed until the energy expenditure curve $E(t)$ is fixed  $\forall t \leq e_j$. Loosely speaking, one can expect that the optimal solution uses efficiently the harvested energy to transmit the available data and, at the same time, tends to minimize the
total overflow of the battery and, thus, maximizes the accumulated energy stored in the battery. In
other words, if overflows are minimized, the node will be able to use more
energy, as, otherwise, the energy of the overflows is lost. 

In the following lines, we define the accumulated battery, a concept introduced in
this work that allows us to characterize the optimal solution when having finite
battery capacity constraint. Let $t_x$ denote the last time instant up to which $D(t)$ (or, equivalently, $E(t)$) is known.
\begin{definition}[Accumulated Battery]
The accumulated battery $B_A(t;t_x)$ is a real measure of the accumulated energy stored in the battery for $t \in [0, t_x]$ and it is the maximum possible accumulated energy in the battery for $t \in (t_x, \infty)$ (assuming that no overflows are produced for $ t \in (t_x,\infty)$)\ie
$B_A(t;t_x) = \sum_{j=0}^{J-1} \left(-O_jH(t_x-e_j) + E_j \right)H(t-e_j)$, where $H(\cdot)$ is the Heaviside function.
\end{definition}
Observe that, for $t \in [0,t_x]$, $B_A(t;t_x)$, represents the real measure of the energy accumulated in the battery during $(0,t)$, because, for these time instants, overflows are known and taken into account. Alternatively, for $t \in (t_x,\infty)$, battery overflows are unknown after $t_x$ and $B_A(t;t_x)$ models the best case scenario where the node is able to store in the battery all the  harvested energy in the interval $(t_x,\infty)$. Moreover, observe that $B_A(t;t_x)$ = $B_A(t; t_y)$ for any $t_x, t_y \geq t$.

%\begin{equation}
%O_j = \left\{ \,
%\begin{IEEEeqnarraybox}[][c]{l?s}
%      0 & if $B(e_j^-) + E_j\:\leq\:C_{max}$, \\
%      E_j - C_{max} + B(e_j^-) & if $B(e_j^-) +E_j\:>\:C_{max}$.
%    \end{IEEEeqnarraybox}
%  \right.
%\end{equation}

At every time instant, the energy stored in the battery is $B(t)=  B_A(t;t) - E(t)$. Note that $B_A(t;t)$ takes into account the actual net incoming energy in the battery, whereas, $E(t)$ is the net outgoing energy. Thus, their difference results in the energy stored in the battery.

\begin{definition}[Minimum Energy Expenditure]
The minimum energy expenditure, $E_{min}(t; t_x)$, is the smallest amount of energy
that the node must have cumulatively spent at time $t > t_x$ such that no overflow of the battery
is produced in the interval $(t_x,t]$\ie $E_{min}(t;t_x) = \left\{ B_A(t;t_x) - C_{max}  \right\}^+ $.
\end{definition}

\begin{definition}[Accumulated Data]
The accumulated data $D_A(t)$ is the sum of data that has arrived at the node
during the time interval $[0,t)$, i.e., $D_A(t)=\sum_{i=0}^{N-1}D_i H(t-d_i)$.
\end{definition}

Different \ac{QoS} constraints can be considered by mapping the constraint into an appropriate minimum data departure curve, which
was introduced in \cite{zafer_calculus_2009}, and is defined as follows:
\begin{definition}[Minimum Data Departure]
\label{Def:min_data_dep}
The minimum data departure, $D_{QoS}(t)$, is the smallest amount of data
that the node must have cumulatively transmitted at time $t$ such that the  \ac{QoS} constraint is satisfied.
\end{definition}

The rest of the paper considers a general $D_{QoS}(t)$ that is a non-negative  staircase function where changes are produced at time instants $q_k$ with increments of $Q_k$ bits for $k=0\dots K-1$\ie $D_{QoS}(t) = \sum_{k=0}^{K-1}Q_k H(t-q_k)$. From now on, the instants $q_k$ are called \textit{quality requirement events}.
Thus, three kind of events are considered, namely, data arrival, energy arrival, and quality requirement events, as summarized in Figure \ref{fig:timeDomain}. Let us briefly describe two examples, introduced in \cite{zafer_calculus_2009}, of how to map a certain \ac{QoS} constraint to the minimum data departure curve:

\begin{itemize}
\item Deadline Constraint: This constraint considers that the maximum permissible delay for the transmission of a certain data packet, $D_k$, is $\theta_k$ seconds. Then, $D_{QoS}(t)$ is a piecewise constant function that changes at instants $q_k= d_k + \theta_k$ with an increment of $D_k$. As a specific case, we can consider that the allowed delay for all the packets is the same, i.e., $\theta_k = \theta$, $\forall k$. Then, the minimum data departure is given by $D_{QoS}(t)= D_A(t-\theta)$.

\item Buffer constraint: A limited data queue of size $\beta$ is considered. Hence, in order not to loose any incoming data, the minimum data departure must be $D_{QoS}(t) = \{ D_A(t)- \beta \}^+$.

\end{itemize}
Observe that depending on the chosen \ac{QoS}, it is likely that the instants $q_k$ are equal to $d_i$ for some values of $k$ and $i$, e.g., for the buffer-type constraint. Hence, we consider that two different types of events can be produced simultaneously at the same time instant.\footnote{
Because of this possible simultaneity between events, the problem considered in this paper is much more difficult than the one considered in \cite{gregori_globecom}.}

Our goal is to find the data departure curve, $D(t)$, that minimizes the
transmission completion time $T$ of the $N$ data packets\ie $D(T)=\sum_{i=0}^{N-1}D_i$, while satisfying the following conditions:
 $(i.)$ \ac{ECC}: energy must be
harvested before it is used by the node or, which is the same, the battery level
in the node must be greater or equal to zero. 
$(ii.)$ \ac{DCC}: it is not
possible to transmit more bits than the ones that have arrived to the node.
$(iii.)$ \ac{QoS} constraint: at time $t$, a minimum amount of data $D_{QoS}(t)$ has to be transmitted in order to preserve the link quality of service.
Moreover, given two data departure curves with the same completion time, the one
that requires less energy is always preferred. From all that has been said above,
the problem can be expressed as follows:
\begin{IEEEeqnarray}{rCl}
%\textstyle
\label{Eq:P_EA_DA_Cmax_QOS}
\min_{\DDC} \quad  & &  T  \\
s.t. \:  \quad& & E(t) \leq B_A(t;t), 				\nonumber			\\
	& &	D_{QoS}(t)  \leq D(t) \leq  D_A(t),	 \nonumber	\\
	& &	D(T) = \sum_{i=0}^{N-1}D_i			\nonumber		.
\end{IEEEeqnarray}

We want to remark the two main difficulties of the problem presented in \eqref{Eq:P_EA_DA_Cmax_QOS}.
First, the integral relation among the data and energy domains through \eqref{EQ: EfromD}.
Second, the fact that neither $T$ nor $B_A(t; t)$ are known from beforehand, due to their dependence on $\DDC$.
Consequently, both $T$ and $B_A(t; t)$ will be found along with the solution to the problem.

This problem is graphically
represented in Figure \ref{fig:P_Visual_QOS}, where the figures at the top and bottom
stand for the energy and data domains, respectively. The \ac{ECC}
is 
represented by
the solid line at the top figure, whereas \ac{DCC} and QoS constraints are
depicted
by the dot-dashed and dashed lines in the figure at the bottom, respectively. The dotted line in the energy domain represents the minimum energy expenditure curve, however, as it can be seen from \eqref{Eq:P_EA_DA_Cmax_QOS}, it is not a constraint of the problem. Hence, $D(t)$ and its associated $E(t)$ must lie within the blank region of the data and energy domains, respectively, in order to be a valid solution. Three different data departure curves (A, B, and C) and their associated
energy expenditure curves are shown. The curve A is not valid since it breaks the \ac{ECC}. The curve B is valid in spite of having an overflow of the battery at $e_3$, which is, in general, a suboptimal strategy as we will show later. Finally, the curve C is not valid because it does not satisfy the \ac{QoS} constraint.

\begin{figure}
\centering
\includegraphics[width=0.55\columnwidth]{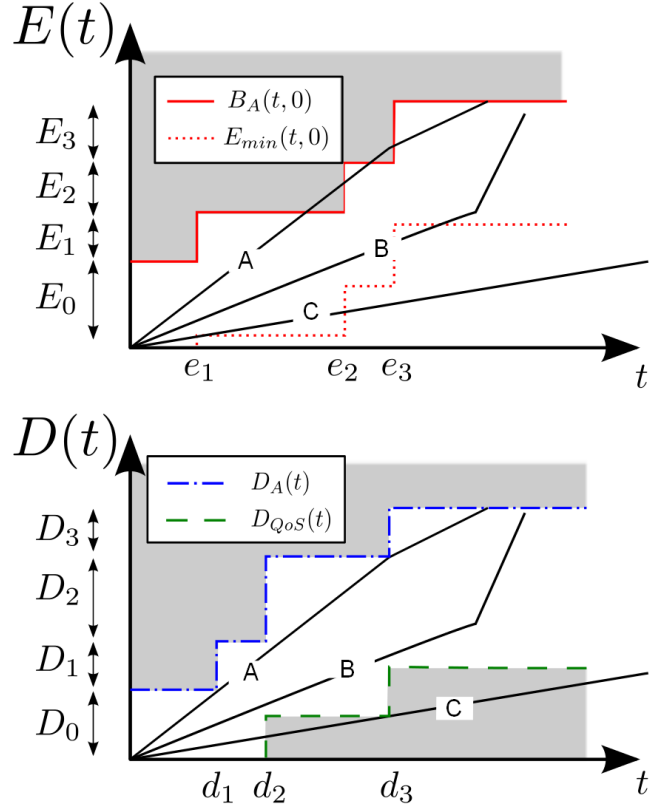}
\vspace{-1em}
\caption{Illustrative representation of the problem presented in \eqref{Eq:P_EA_DA_Cmax_QOS}.}
\label{fig:P_Visual_QOS}
\end{figure}

We want to remark that Figure \ref{fig:P_Visual_QOS} is just an illustrative representation of the problem. As mentioned before, both the accumulated battery and the minimum energy expenditure depend on the selected transmission strategy. Note that the figure shows $B_A(t; 0)$ as \ac{ECC} instead of the real \ac{ECC} $B_A(t; t)$. In some manner, with $B_A(t; 0)$, we are showing the battery accumulated that is obtained when $D(t)$ does not produce any battery overflow. In case that overflows cannot be avoided, it would be necessary to subtract the amount of the overflow to the shown accumulated battery and minimum energy expenditure from that time instant onward. Thus, we can only show the real graphical representation of the problem once we have fixed the solution.
% Note that curve A does not satisfy energy
%causality constraints so it is not a valid solution. The minimum energy
%expenditure is shown by the red dashed-line. When an overflow is produced, the
%solution crosses the dashed-line, i.e., curve C, and the accumulated battery
%must be re-scaled due to the lost energy.  

%\begin{figure}[t]
%\centering
%\includegraphics[width=\columnwidth]{Figures/p2}
%\caption{System Model.}
%\label{fig:sysModel_QOS}
%\end{figure}

Note that \eqref{Eq:P_EA_DA_Cmax_QOS} is not a convex optimization problem and that its conversion into a convex problem is not straightforward.
Thus, we cannot directly solve \eqref{Eq:P_EA_DA_Cmax_QOS}.
Alternatively, in next section, we model the properties that the optimal solution must satisfy, which will allow us to construct the optimal data departure curve.

%\vspace{-0.8 cm}
\section{Properties of the optimal solution}
\label{sec:QoS}
As pointed out previously, in this section we will characterize the optimal data departure curve, $D^\star(t)$, and its associated energy expenditure curve, $E^\star(t)$, for the problem \eqref{Eq:P_EA_DA_Cmax_QOS}:

% MP: Canvi
% Let us start by defining some properties of 

\begin{problem}[Transmission Without Events]
Consider that the optimal departure curve is known up to $t_1$ and that we want to characterize its optimal behaviour in the time interval $(t_1
, t_2)$ where there are no changes in $D_A(t)$, $B_A(t; t_1)$, and $D_{QoS}(t)$ . We also consider that
the data departure curve at the boundary of the intervals is $D(t_1)$ and
$D(t_2)$, respectively, and that these two points satisfy the data, energy, and QoS constraints.
\label{P:NoEvents_QOS} 
\end{problem}

\begin{lemma}
In Problem \ref{P:NoEvents_QOS}, $D^\star(t)$ is a straight line where the slope,
or, equivalently, the transmission rate, is constant and equal to $r(t) =
\frac {D(t_2) - D(t_1)}{t_2 - t_1} $, $\forall t \in (t_1, t_2) .$
\label{L:NoEvents_QOS}
\end{lemma}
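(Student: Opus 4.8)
The plan is to exploit that the two endpoints $D(t_1)$ and $D(t_2)$ are held fixed and that no event occurs inside $(t_1,t_2)$. Any admissible curve on $(t_1,t_2)$ that keeps these endpoints coincides with $D^\star(t)$ outside the interval, so all such candidates share the same completion time $T$. By the stated preference for the strategy that consumes less energy among curves with equal completion time, characterizing $D^\star(t)$ on $(t_1,t_2)$ reduces to a single task: among all feasible data departure curves joining $D(t_1)$ to $D(t_2)$, find the one that spends the least energy in the interval.

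The core of the argument is a convexity/Jensen estimate. Writing $r(\tau)=\frac{\d D(\tau)}{\d \tau}$, the energy spent in the interval is, by \eqref{EQ: EfromD},
\begin{equation}
E(t_2)-E(t_1)=\int_{t_1}^{t_2} g\!\left(r(\tau)\right)\,\d\tau,
\end{equation}
subject to the fixed total $\int_{t_1}^{t_2} r(\tau)\,\d\tau = D(t_2)-D(t_1)$. Since $g(\cdot)$ is convex, Jensen's inequality yields
\begin{equation}
\int_{t_1}^{t_2} g\!\left(r(\tau)\right)\,\d\tau \;\ge\; (t_2-t_1)\,g\!\left(\frac{D(t_2)-D(t_1)}{t_2-t_1}\right),
\end{equation}
with equality if and only if $r(\tau)$ is (almost everywhere) constant. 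The minimizing constant rate is forced by the data constraint to equal the average rate $\frac{D(t_2)-D(t_1)}{t_2-t_1}$, which is precisely the claimed slope and corresponds to the straight line joining the two endpoints.

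It remains to check that this straight line is feasible, which is where the care lies because the energy spent, and thus the \ac{ECC}, is coupled through $g(\cdot)$ and is path-dependent: $E(t_2)$ depends on the whole rate profile on $(t_1,t_2)$, not merely on $D(t_2)$. Two observations resolve this. First, the data causality and \ac{QoS} constraints live purely in the data domain: as there are no data or quality events in $(t_1,t_2)$, the bounds $D_A(t)$ and $D_{QoS}(t)$ are constant there, and since the straight line is non-decreasing it lies between its endpoints, so $D_{QoS}(t)\le D^\star(t)\le D_A(t)$ holds throughout. Second, the Jensen bound shows the straight line attains the \emph{smallest} possible $E(t_2)$ among all curves joining the endpoints; hence, whenever the prescribed endpoints admit any feasible curve, the straight line satisfies $E(t_2)\le B_A(t_2;t_2)$ as well. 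Because no energy arrives in $(t_1,t_2)$, $B_A(t;t)$ is constant on the interval while the constant-rate $E^\star(t)$ is affine and increasing, so $E^\star(t)\le E(t_2)\le B_A(t;t)$ for every $t$ in the interval, establishing the \ac{ECC}. Thus the straight line is both feasible and energy-minimal, and by the tie-breaking preference it is the optimal $D^\star(t)$ on $(t_1,t_2)$.

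The step I expect to be most delicate is precisely the \ac{ECC} feasibility, because of the path-dependence just noted: the argument must route through the energy-minimality of the straight line rather than a naive pointwise comparison in the energy domain, and then invoke monotonicity of the affine energy curve together with the constancy of $B_A(t;t)$ in the event-free window. The Jensen inequality itself is routine; its only fine point is the equality case, where strict convexity of $g(\cdot)$ would be needed to upgrade optimality of the straight line to its uniqueness as the optimizer.
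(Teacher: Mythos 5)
Your proof is correct and takes essentially the same route as the paper, whose entire proof is to invoke the integral form of Jensen's inequality ``in a similar way to the BT-problem in \cite{zafer_calculus_2009}''; you have simply written out the reduction (fixed endpoints, same $T$, energy tie-break) and the feasibility bookkeeping that the paper delegates to that citation, including the correct observation that the energy causality check must go through energy-minimality of the line rather than a pointwise comparison, and that strict convexity of $g(\cdot)$ is what makes the straight line the \emph{unique} optimizer. One small refinement: if an energy arrival happens exactly at $t_2$, the bound $E(t_2)\le B_A(t_2;t_2)$ is weaker than what you need, but continuity of $E(t)$ together with the comparison curve satisfying $E(t)\le B_A(t;t)=c$ on the open interval gives $E(t_2)\le c$ directly, so your chain $E^\star(t)\le E(t_2)\le B_A(t;t)$ goes through with this one-line patch.
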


\begin{proof}
The proof follows from the integral version of Jensen's inequality in a similar way to the BT-problem in \cite{zafer_calculus_2009}.
\end{proof}

%Note that from the previous Lemma, the power/rate only changes when an event is produced, namely, at $e_j$ or $d_i$, because if a change is produced at any other time instant, it is always possible to define an interval that contains the rate change where the transmission can be done at constant rate. This is summarized in the following Corollary:

\begin{corollary}
Lemma \ref{L:NoEvents_QOS} implies that $D^\star(t)$ is a piece-wise linear
function such that its slope, which is equivalent to the transmission rate, can
only change either at $e_j$, $d_i$ or $q_k$. 
\label{C:Piecewise_QOS}
\end{corollary}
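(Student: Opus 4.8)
The plan is to lift Lemma~\ref{L:NoEvents_QOS}, which is local to a single event-free window, to the global statement by tiling the whole transmission interval $[0,T]$ with such windows. First I would gather every instant at which at least one of the three functions that define the feasible set of \eqref{Eq:P_EA_DA_Cmax_QOS} jumps: $D_A(t)$ jumps exactly at the data arrivals $d_i$, $D_{QoS}(t)$ jumps exactly at the quality requirement instants $q_k$, and the accumulated battery $B_A(t;t)$ jumps exactly at the energy arrivals $e_j$. Since $i$, $j$, $k$ range over finite index sets, the union $\{d_i\}\cup\{e_j\}\cup\{q_k\}$ is finite; sorting it (and merging coincident instants, which the model explicitly allows) together with the terminal time $T$ gives an increasing grid $0=\tau_0<\tau_1<\dots<\tau_M=T$.

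The second step is to certify that every open cell $(\tau_m,\tau_{m+1})$ really is \emph{event-free} in the precise sense demanded by Problem~\ref{P:NoEvents_QOS}. By construction no $d_i$ or $q_k$ lies inside the cell, so $D_A(t)$ and $D_{QoS}(t)$ are constant there. For the battery bound I would invoke the fact that an overflow $O_j$ can be created only at an energy arrival $e_j$: between consecutive energy arrivals no energy enters the node, the stored battery is non-increasing, and hence $B_A(t;\tau_m)$ --- the effective energy-causality envelope on the cell --- stays flat. Thus all three constraint functions are constant on $(\tau_m,\tau_{m+1})$, and the endpoint values $D^\star(\tau_m)$, $D^\star(\tau_{m+1})$ inherit feasibility from the global optimum, so the hypotheses of Problem~\ref{P:NoEvents_QOS} are met.

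The crux, which I expect to be the only delicate step, is a principle-of-optimality argument showing that the restriction of the global optimum $D^\star$ to each cell is itself optimal for the corresponding instance of Problem~\ref{P:NoEvents_QOS}. I would argue by contradiction: if on some cell a different admissible curve joining $D^\star(\tau_m)$ to $D^\star(\tau_{m+1})$ consumed strictly less energy, then splicing it into $D^\star$ --- leaving $D^\star$ untouched outside $[\tau_m,\tau_{m+1}]$ --- yields a curve with the same departure profile for $t\ge\tau_{m+1}$, hence the same completion time $T$ and the same $D(T)=\sum_i D_i$. Feasibility of the splice is where care is needed: since the cell contains no energy arrival, lowering $E(t)$ inside it only raises the stored battery and can create no new overflow within the cell, so the \ac{ECC} is preserved there; the energy saved, say $\Delta>0$, propagates as $E(t)=E^\star(t)-\Delta$ for all $t\ge\tau_{m+1}$ (possibly diminished by at most $\Delta$ of extra overflow at a later $e_j$), which keeps the battery non-negative and, by the overflow mechanism, still capped by $C_{max}$. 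The spliced curve is therefore feasible and consumes strictly less total energy while matching $T$, contradicting the tie-breaking rule of \eqref{Eq:P_EA_DA_Cmax_QOS} that prefers minimal energy among equal-time solutions.

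With optimality of each restriction established, Lemma~\ref{L:NoEvents_QOS} applies verbatim on every cell and forces $D^\star(t)$ to be the straight segment of constant rate $\bigl(D^\star(\tau_{m+1})-D^\star(\tau_m)\bigr)/(\tau_{m+1}-\tau_m)$ there. Because the rate is constant on each $(\tau_m,\tau_{m+1})$, it can change only at the grid points $\tau_m$, every one of which is by construction a data arrival $d_i$, an energy arrival $e_j$, or a quality requirement $q_k$. Concatenating the segments delivers the asserted piece-wise linear form of $D^\star$, completing the proof.
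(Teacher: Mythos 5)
Your proposal is correct and follows essentially the route the paper intends: the paper states the corollary as an immediate consequence of Lemma~\ref{L:NoEvents_QOS}, implicitly tiling $[0,T]$ by the finitely many event instants $\{d_i\}\cup\{e_j\}\cup\{q_k\}$ and applying the lemma on each event-free cell, exactly as you do. Your only addition is to spell out the splice-and-propagate step (local optimality of the restriction of $D^\star$, with the saved energy $\Delta$ absorbing at most $\Delta$ of extra overflow at later energy arrivals so feasibility is preserved), which the paper leaves implicit and which you handle correctly.
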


From the previous lemma and corollary, it follows that constant rate transmission saves energy due to the convexity of the power-rate function $g(\cdot)$.
However, constant rate transmission is not optimal when a battery overflow is produced because the energy saved due to constant rate transmission is lower than the energy lost in the overflow. Consequently, the optimal solution increases the rate before the overflow until either there is no overflow or the data buffer is empty,
 as shown in the following lemma and its subsequent proof:
\begin{lemma}
\label{Result:NoOverflow}
Under the optimal policy, battery overflows may only be produced when there is no data to be transmitted.
\end{lemma}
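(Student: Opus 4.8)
The plan is to argue by contradiction using a local rate-increase (exchange) perturbation, in the same spirit as the cumulative-curve arguments of \cite{zafer_calculus_2009}. Suppose $D^\star(t)$ is optimal yet produces a strictly positive overflow $O_j>0$ at some energy arrival $e_j$ while the data buffer is \emph{not} empty just before $e_j$, i.e. $D^\star(e_j^-)<D_A(e_j^-)$. Recall from the definition of $O_j$ that an overflow forces the battery to be full right after the arrival, $B(e_j^+)=C_{max}$, and that just before the arrival the battery carries $B(e_j^-)=O_j+C_{max}-E_j$ units of energy, a strictly positive part of which is about to be wasted.

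First I would construct the competing curve $\tilde D$. By Corollary \ref{C:Piecewise_QOS} the rate of $D^\star$ is constant on a left-neighbourhood $(e_j-\epsilon,e_j)$; on this interval I raise the rate so as to spend an additional amount of energy $\Delta E\in(0,\min\{O_j,B(e_j^-)\})$, transmitting $\Delta D>0$ extra bits, and I leave $\tilde D=D^\star$ elsewhere up to $e_j$. I then verify feasibility of $\tilde D$ on $(e_j-\epsilon,e_j)$: the \ac{DCC} holds because the buffer is non-empty there, so the extra $\Delta D$ bits are available; the \ac{QoS} constraint $D_{QoS}(t)\le\tilde D(t)$ can only be helped since $\tilde D\ge D^\star$; and the \ac{ECC} holds for $\epsilon,\Delta E$ small because we draw at most $\Delta E<B(e_j^-)$ from a battery that is strictly positive near $e_j$, so $\tilde B(t)\ge0$ throughout.

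The crux is to show that $\tilde D$ strictly improves on $D^\star$. The extra expenditure lowers the pre-arrival battery to $\tilde B(e_j^-)=B(e_j^-)-\Delta E$, hence the new overflow is $\tilde O_j=O_j-\Delta E\ge0$ and, whether or not a residual overflow remains, the post-arrival battery is unchanged, $\tilde B(e_j^+)=C_{max}=B(e_j^+)$. Thus at time $e_j^+$ the two policies share an \emph{identical} battery level and identical future energy arrivals, yet $\tilde D$ has already delivered $\Delta D$ more bits: the energy $\Delta E$ that $D^\star$ squanders in the overflow is instead spent transmitting data that was waiting in the buffer. Consequently the residual transmission problem faced by $\tilde D$ after $e_j$ is the same one faced by $D^\star$ but with strictly less remaining data and no fewer energy resources, so its optimal completion time is no larger and, by the secondary energy-preference stated after \eqref{Eq:P_EA_DA_Cmax_QOS}, strictly preferable. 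This contradicts the optimality of $D^\star$ and forces $D^\star(e_j)=D_A(e_j)$ whenever an overflow occurs, i.e. an empty buffer.

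I expect the main obstacle to be the feasibility bookkeeping of the perturbation rather than the improvement step: one must keep $\epsilon$ and $\Delta E$ small enough that the raised rate neither drives the battery negative on $(e_j-\epsilon,e_j)$ nor exhausts the buffered data, and one must dispose of the degenerate situation $B(e_j^-)=0$ (possible only when a single packet has $E_j>C_{max}$), in which the overflow is physically unavoidable and no local rate increase can reduce it, so that no feasible improving perturbation exists and the claim holds vacuously.
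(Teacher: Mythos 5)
Your proof is correct and takes essentially the same route as the paper: the paper's proof simply defers to the rate-increase-before-overflow exchange argument of \cite[Appendix A]{gregori_globecom}, adding only the remark that the \ac{QoS} constraint is a lower bound on $D(t)$ and is therefore only slackened by a perturbation with $\tilde D \ge D^\star$ --- exactly the feasibility check you carry out explicitly, together with the unchanged post-arrival battery $\tilde B(e_j^+) = C_{max}$ that drives the contradiction. One small correction: your degenerate case $B(e_j^-)=0$ with $E_j > C_{max}$ does not make the claim hold \emph{vacuously} --- an unavoidable overflow occurring while data remains in the buffer would falsify the lemma as stated rather than leave it trivially true --- so what is really needed there is the standard (implicit) assumption $E_j \le C_{max}$, under which your main argument covers all cases.
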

\begin{IEEEproof}
A similar proof was given in \cite{gregori_globecom} for the \ac{QoS} relaxed problem. 
Note that Lemma \ref{Result:NoOverflow} states that if an overflow is produced, the data departure curve must reach its upper bound\ie $\DDC = \acData$. Thus, the insertion of the \ac{QoS} constraint, which is a lower bound 
on $D(t)$, does not affect the validity of the proof given in \cite{gregori_globecom}.
Therefore, Lemma \ref{Result:NoOverflow} can be proved by following the approach in \cite[Appendix A]{gregori_globecom}.
\end{IEEEproof}

 The following lemma states that by the end of the transmission the battery must be empty, otherwise, transmission could have been finished earlier by transmitting at a higher rate.

\begin{lemma}
\label{L:useAllEnergy}
The optimal solution must satisfy that, at the instant $T$ at
which all the data has been transmitted, the energy expenditure is equal to the accumulated
battery, i.e., $E^\star(T) = B_A(T; T)$.
\end{lemma}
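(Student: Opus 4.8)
The plan is to argue by contradiction: I will show that any strictly positive residual battery at the completion time could be used to transmit the final bits faster, producing a feasible curve that finishes strictly before $T$ and thereby contradicting the minimality of $T$. First I would record the trivial direction. Evaluating the \ac{ECC} at $t=T$ gives $E^\star(T) \le B_A(T;T)$, i.e. the residual battery $B(T)=B_A(T;T)-E^\star(T)$ is non-negative, so it only remains to exclude $B(T)>0$. Assume therefore, for contradiction, that $B(T)>0$.

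Next I would isolate the last linear piece of the optimal curve. By Corollary~\ref{C:Piecewise_QOS}, $D^\star(t)$ is piecewise linear with breakpoints only at the events $e_j,d_i,q_k$; let $t_\ell$ denote the last such event strictly before $T$, so that on $(t_\ell,T)$ the rate is a constant $r^\star$ and no data, energy, or quality event occurs. Since the transmission genuinely completes at $T$ (and not earlier), this final segment must carry $\Delta D = D^\star(T)-D^\star(t_\ell) > 0$ bits, hence $r^\star>0$. Moreover, because $D^\star$ is continuous (finite rate) and $D^\star(T)=\sum_i D_i$, the last data packet must arrive strictly before $T$: otherwise the \ac{DCC} at $T^-$ would read $\sum_i D_i = D^\star(T^-) \le D_A(T^-) < \sum_i D_i$, a contradiction. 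Consequently $D_A$, $D_{QoS}$, and $B_A(\cdot;\cdot)$ are all constant on $(t_\ell,T)$, the \ac{DCC} is strictly slack there, and the \ac{QoS} bound is inactive.

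Then I would build the faster competitor $\tilde D$, equal to $D^\star$ on $[0,t_\ell]$ and transmitting the remaining $\Delta D$ bits at a higher constant rate $\tilde r > r^\star$, completing at $T' = t_\ell + \Delta D/\tilde r < T$. Feasibility follows term by term: the \ac{DCC} holds because $\tilde D \le \sum_i D_i = D_A$ on $(t_\ell,T']$, where $D_A$ is already saturated; the \ac{QoS} constraint holds because $\tilde D \ge D^\star \ge D_{QoS}$ on this interval; and the \ac{ECC} holds for $\tilde r$ close enough to $r^\star$. For the last point I would invoke continuity: the energy spent by the end, $\tilde E(T') = E^\star(t_\ell) + (T'-t_\ell)\,g(\tilde r)$, tends to $E^\star(T)$ as $\tilde r \to r^\star$, and since $E^\star(T) < B_A(T;T) = B_A(t;t)$ (the latter constant on $(t_\ell,T]$), a sufficiently small rate increase keeps $\tilde E(t) \le B_A(t;t)$ throughout the modified segment, its maximum being attained at $T'$. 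Hence $\tilde D$ is feasible with completion time $T'<T$, contradicting the optimality of $T$; therefore $B(T)=0$, that is $E^\star(T)=B_A(T;T)$.

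I expect the main obstacle to be the careful handling of events coinciding with the right endpoint, specifically justifying that no data packet arrives exactly at $T$ so that the data constraint is truly slack in a left-neighbourhood of $T$, and verifying the \ac{ECC} over the whole perturbed segment rather than only at $T'$. Both are controlled by exploiting the strict inequality $B(T)>0$ together with the continuity of $E^\star(\cdot)$ and of the power-rate function $g(\cdot)$; notably, convexity of $g$ is not needed here, only its continuity.
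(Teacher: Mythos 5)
Your proof is correct and takes essentially the same route as the paper's: the paper's proof is simply a pointer to Lemma 5 of \cite{tutuncuoglu_optimum_2010}, whose argument is exactly this contradiction --- if $E^\star(T) < B_A(T;T)$, raise the rate on the final constant-rate segment of the piecewise-linear optimum to finish strictly before $T$. Your write-up additionally verifies the \ac{DCC} and \ac{QoS} feasibility of the perturbed curve and rules out a data arrival exactly at $T$, details the cited reference does not need (there all data is available at $t=0$) and which the paper leaves implicit; the only remaining gloss is that your claim that $B_A(t;t)$ is constant on $(t_\ell,T]$ tacitly uses the paper's convention that accumulated quantities count arrivals over the half-open interval $[0,t)$, so that an energy packet landing exactly at $T$ is not charged to $B_A(T;T)$ (without that convention the lemma itself would fail).
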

\begin{IEEEproof}
Similar to the proof of Lemma 5 in \cite{tutuncuoglu_optimum_2010}.
\end{IEEEproof}

In the remainder of the paper, the term \textit{epoch} denotes each of the time intervals of  $D^\star(t)$ at which transmission is done at constant rate.
We define $M$ as the total number of epochs of the optimal solution, i.e., the number of linear pieces of $D^\star(t)$.
Note that $M$ is unknown a priori.
Consequently, solving the problem in \eqref{Eq:P_EA_DA_Cmax_QOS} is equivalent to determining the rate $r_m$ and length $\ell_m$ of each epoch\ie $\{r_m, \ell_m\}_{m=0}^{m=M-1}$.
To do so, we have developed an iterative algorithm that, at the $m$-th iteration, determines the rate and duration of the $m$-th epoch\ie $\{r_m,\ell_m\}$.
We denote $\tau_m$ as the instant at which the $m$-th epoch begins\ie $\tau_m = \sum_{p=0}^{m-1} \ell_p$.
The algorithm ends when all data has been efficiently transmitted. With this, $M$, $T$, $B_A(t; T)$, and $D^\star(t)$ are found.

To simplify the complexity of our algorithm, at the beginning of the $m$-th iteration\footnote{Note that $D^\star(t)$ is known in $[0,\tau_m]$.}, the origin of coordinates is moved to the point $(\tau_m, D^\star(\tau_m))$. To be coherent with the vertical and horizontal displacement of the origin of coordinates, the data and energy constraints in \eqref{Eq:P_EA_DA_Cmax_QOS} must be vertically rescaled by $D^\star(\tau_m)$ or $E^\star(\tau_m)$, respectively, and temporally displaced by $\tau_m$. In the remainder of the paper, a super-index $(m)$ above a variable ($B_A^{(m)}(t)$, $D_A^{(m)}(t)$, $D_{QoS}^{(m)}(t)$, and $E_{min}^{(m)}(t)$) denotes that it is the rescaled version at the $m$-th iteration\footnote{The relations among the iteration specific and general versions of the variables are given in Table \ref{tab:Notation}.}, e.g., $B_A^{(m)}(t) = B_A(t+\tau_m; \tau_m) - E^\star(\tau_m)$\footnote{We have dropped the second argument in $B_A^{(m)}(t)$ and $E_{min}^{(m)}(t)$ since within the $m$-th iteration, the second argument, which denotes the last instant at which the solution is known, is always $\tau_m$.}.

From the structure of the problem in \eqref{Eq:P_EA_DA_Cmax_QOS}, it is easy to expect that it may not have a feasible solution whenever the node has to fulfill very tight \ac{QoS} requirements, while, at the same time, it does not harvest enough energy to transmit all the required data. The following lemma is checked  at every iteration of our proposed algorithm to determine whether the problem in \eqref{Eq:P_EA_DA_Cmax_QOS} has a feasible solution or not.
\begin{lemma}
\label{L:Solution}
The problem \eqref{Eq:P_EA_DA_Cmax_QOS} does not have a feasible solution whenever
\begin{equation}
\label{Eq:SolutionExists}
D_{QoS}^{(m)}(q_k^{(m)}) > q_k^{(m)} g^{-1}\left(B_A^{(m)}(q_k^{(m)})/q_k^{(m)}\right),
\end{equation}
for some quality requirement event $q_k^{(m)} \in (0, T^{(m)})$\footnote{\label{FN:ReescaledTV}Where $X^{(m)} = X- \tau_m$ is rescaled version of some temporal variable $X$ at the $m$-th iteration (see Table \ref{tab:Notation}).}.
\end{lemma}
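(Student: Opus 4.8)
The plan is to prove the contrapositive: I will show that \emph{any} data departure curve that is feasible for \eqref{Eq:P_EA_DA_Cmax_QOS} must satisfy the reverse inequality
\begin{equation}
D_{QoS}^{(m)}(q_k^{(m)}) \;\leq\; q_k^{(m)}\, g^{-1}\!\left(B_A^{(m)}(q_k^{(m)})/q_k^{(m)}\right)
\end{equation}
at every quality requirement event $q_k^{(m)} \in (0, T^{(m)})$; negating this yields exactly the stated infeasibility criterion. The whole argument is local to the single instant $q_k^{(m)}$ and rests on one idea: the minimum energy that must be spent to meet the \ac{QoS} floor by time $q_k^{(m)}$ cannot exceed the maximum energy the battery can possibly supply by that time. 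I work throughout in the rescaled coordinates of the $m$-th iteration, so that the origin sits at $(\tau_m, D^\star(\tau_m))$ and $D^{(m)}(0) = E^{(m)}(0) = 0$.

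First I would lower-bound the energy spent up to $q_k^{(m)}$. Any feasible curve obeys the \ac{QoS} constraint $D^{(m)}(q_k^{(m)}) \geq D_{QoS}^{(m)}(q_k^{(m)})$. Writing $E^{(m)}(q_k^{(m)}) = \int_0^{q_k^{(m)}} g(r^{(m)}(\tau))\,\d\tau$ and applying the integral form of Jensen's inequality to the convex function $g(\cdot)$ — exactly the tool behind Lemma \ref{L:NoEvents_QOS} — gives
\begin{equation}
E^{(m)}(q_k^{(m)}) \;\geq\; q_k^{(m)}\, g\!\left(\frac{1}{q_k^{(m)}}\int_0^{q_k^{(m)}} r^{(m)}(\tau)\,\d\tau\right) \;=\; q_k^{(m)}\, g\!\left(\frac{D^{(m)}(q_k^{(m)})}{q_k^{(m)}}\right),
\end{equation}
since $\int_0^{q_k^{(m)}} r^{(m)}(\tau)\,\d\tau = D^{(m)}(q_k^{(m)})$. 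Because $g(\cdot)$ is monotonically increasing and $D^{(m)}(q_k^{(m)}) \geq D_{QoS}^{(m)}(q_k^{(m)})$, this tightens to $E^{(m)}(q_k^{(m)}) \geq q_k^{(m)}\, g\!\left(D_{QoS}^{(m)}(q_k^{(m)})/q_k^{(m)}\right)$; intuitively, constant-rate transmission at rate $D_{QoS}^{(m)}(q_k^{(m)})/q_k^{(m)}$ is the cheapest way to clear the floor, and any other profile costs at least as much.

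Next I would upper-bound the energy that can be available. The \ac{ECC} reads $E(t) \leq B_A(t;t)$, and since overflows can only \emph{decrease} the accumulated battery as its second argument grows, $B_A(t;t) \leq B_A(t;\tau_m)$ for $t \geq \tau_m$; subtracting $E^\star(\tau_m)$ and passing to rescaled coordinates gives $E^{(m)}(q_k^{(m)}) \leq B_A^{(m)}(q_k^{(m)})$. Thus $B_A^{(m)}(q_k^{(m)})$ — the best-case battery that assumes no overflow after $\tau_m$ — is a genuine upper bound on the usable energy. Chaining the two bounds yields $q_k^{(m)}\, g\!\left(D_{QoS}^{(m)}(q_k^{(m)})/q_k^{(m)}\right) \leq B_A^{(m)}(q_k^{(m)})$, and dividing by $q_k^{(m)} > 0$ and applying the increasing inverse $g^{-1}$ produces the claimed feasibility inequality; its contrapositive is the lemma. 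I expect the main obstacle to be conceptual rather than computational: one must argue that the \emph{best-case} accumulated battery $B_A^{(m)}$, rather than the true overflow-reduced battery, is the correct quantity with which to bound $E^{(m)}$ from above, so that the resulting inequality is a bona fide necessary condition whose failure excludes every feasible curve — and one must check that the restriction $q_k^{(m)} \in (0, T^{(m)})$ both legitimizes the division by $q_k^{(m)}$ and guarantees that the \ac{QoS} floor is genuinely binding during the transmission.
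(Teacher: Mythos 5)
Your proof is correct and takes essentially the same route as the paper: the paper's constant-rate curve $\bar D(t)$ that empties the best-case battery $B_A^{(m)}(q_k^{(m)})$ at $q_k^{(m)}$ is exactly the Jensen-based maximal-data bound you make explicit, so your contrapositive formulation is the same argument with the implicit steps (optimality of constant rate via convexity of $g(\cdot)$, and the upper bound $E^{(m)}(q_k^{(m)}) \leq B_A^{(m)}(q_k^{(m)})$ from the energy causality constraint with overflows only tightening it) spelled out. No gaps.
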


\begin{IEEEproof}
Whenever we encounter that the problem does not have a feasible solution is because, at some quality requirement event $q_k^{(m)}$, it is not possible to fulfill all the constraints. Let $\bar D(t)$ be the data departure curve that transmits the maximum amount of data in the interval $(0,q_k^{(m)})$\ie  $\bar D(t) = g^{-1}\left(B_A^{(m)}(q_k^{(m)})/q_k^{(m)}\right) q_k^{(m)}$. Note that this curve has constant rate, empties the battery  at $q_k^{(m)}$, and that the constrains are not necessarily satisfied. If, at some $q_k^{(m)} \in (0, T^{(m)})$, the \ac{QoS} constraint requires more than $g^{-1}\left(B_A^{(m)}(q_k^{(m)})/q_k^{(m)}\right) q_k^{(m)}$ bits to be transmitted, then the problem does not have a feasible solution.
\end{IEEEproof}

%Whenever we encounter that the problem does not have a feasible solution is because, at some quality requirement event $q_k^{(m)}$, it is not possible to fulfill all the constraints\footref{FN:ReescaledTV}. Therefore, in practice, when computing the optimal $D(t)$, the previous lemma does not need to be checked at every time instant $t_x \in (0, T^{(m)})$, but only at the quality requirement events\ie $q_k^{(m)} \in (0, T^{(m)})$ and at every iteration of the algorithm. 

If there is no $q_k^{(m)} \in (0, T^{(m)})$ that satisfies Lemma \ref{L:Solution}, then the problem still may have a feasible solution and at least another epoch\ie $\{r_m,\ell_m\}$, can be determined. In this context, in next subsection, we model how the rate changes must be produced in order to be optimal.

%Moreover, we can model the behavior at quality requirement events with the following lemma:
%\begin{lemma}
%$D^\star(t)$ satisfies that if the rate changes at a time instant where only a quality requirement event takes place, i.e.,
%at $t = q_k$, then $D^\star(q_k) =  D_{QoS}(q_k)$ and there is a rate/power
%decrease, i.e., $r(q_k^-) > r(q_k^+) $.
%\label{L:qi}
%\end{lemma}
%\begin{proof}
%The proof of this lemma is equivalent to the proof of the second point of Lemma \ref{L:Si_NoOverflow} that can be found in Appendix \ref{A:Proof_L_Si_NoOverflow} (part \ref{A:Proof_L_Si_NoOverflow_3}).
%\end{proof}

%\vspace{-0.55cm}
\subsection{Constraints mapping into the data domain for a given epoch}
\label{Sec:DataMapping_QOS}

Within an algorithm iteration, the \ac{ECC} can be mapped to the data domain, hence, allowing us to merge both constraints to the most restrictive constraint.
Let us consider that the algorithm is at the beginning of the $m$-th iteration\ie the optimal solution is known up to $\tau_m$, where the rate $r_m$ and length $\ell_m$ of the $m$-th epoch must be determined.
Given that transmission must be done at constant rate/power, the maximum amount of data that can be transmitted at a certain time instant $t_y$ due to the energy causality constraint is $g^{-1}(p_y)t_y$, where $p_y = B_A^{(m)}(t_y)/t_y$.\footnote{
Note that for the constraints mapping it is not necessary that $p_y$ satisfies the constraints. The constraints fulfillment  is enforced by the algorithm that computes the optimal solution, which is explained in Section \ref{sec:algorithm_qos}.}
With this, at the instant $t_y$,  the upper-bound on the energy expenditure curve has been mapped to an upper-bound on the data departure curve, as shown in Figure \ref{fig:mappings} for the instants $t_1$ and $t_2$.
By applying this procedure at all time instants $t_y \in (0, T^{(m)})$, we can map the whole upper-bound in the energy domain to an upper-bound in the data domain, which we denote by $\bar D_{B_A}^{(m)}(t) = g^{-1}(B_A^{(m)}(t)/t)t$ and call \textit{actual mapping}.
However, doing this computation for each time instant has a high computational cost.
In summary, if a data departure curve that transmits at constant rate\ie $D^{(m)}(t) = r_m t$, satisfies that $D^{(m)}(t)  \leq \bar D_{B_A}^{(m)}(t), \forall t$, then it also satisfies the \acp{ECC}.

The cost associated with the computation of $\bar D_{B_A}^{(m)}(t)$ can by reduced by noting that it is suboptimal that $D^{(m)}(t)$ reaches the \textit{actual mapping} at any time instant $t_1$ that is not an event\ie  $t_1 \neq d_i^{(m)}$, $t_1 \neq e_j^{(m)}$, and   $t_1 \neq q_k^{(m)}$, $\forall i, j, k$ . This is clearly seen in Figure \ref{fig:mappings}. Observe that
if $D^{(m)}(t_1) = \bar D_{B_A}^{(m)}(t_1)$, then the battery is empty at $t_1$\ie  $E^{(m)}(t_1) = B_A^{(m)}(t_1)$, which follows from the definition of the actual mapping. Consequently, in order to satisfy \acp{ECC}, the rate at $t_1^+$ must be zero as no energy arrival is produced at $t_1$. From Corollary \ref{C:Piecewise_QOS}, we know that this rate change is suboptimal and, therefore, the data departure curve can only reach the \textit{actual mapping} in some event. Thus, to reduce the computational complexity of the \textit{actual mapping}, we can compute the value of the mapping only at the aforementioned events and assign a constant value in the interval between events. In the rest of the paper, we refer to this mapping as \textit{effective mapping}, i.e., $D_{B_A}^{(m)}(t)$. Note that the \textit{effective mapping} is an upper bound of the \textit{actual mapping}. However, we want to remark that by using the \textit{effective mapping}, we are not relaxing the constraints of the problem since both mappings are equal at the time instants where the optimal solution for the data departure curve coincides with the \textit{actual mapping}.

\begin{figure}
\centering
\includegraphics[width=0.50\columnwidth]{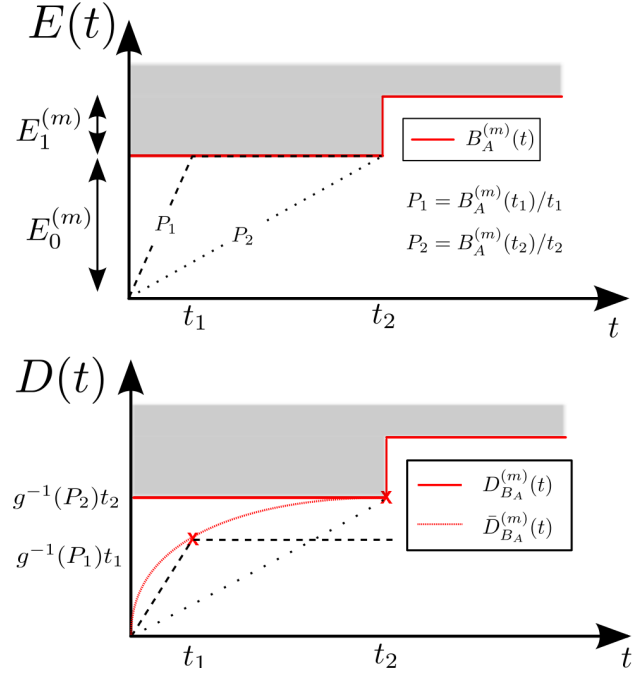}
\vspace{-1em}
\caption{Mapping of the energy constraint to the data domain. The plot in the bottom shows the suboptimality to reach $\bar D^{(m)}_{B_A}(t)$ at any time instant that is not an event. This implies that in practice the \textit{effective mapping} $D^{(m)}_{B_A}(t)$ can be used as the mapping of the energy constraint to the data domain.} 
\label{fig:mappings}
%\vspace{-2em}
\end{figure}

%As $\bar D_{B_A}^{(m)}(t)$ is concave and increasing, which can be shown with basic algebra from the general properties of the function $g(\cdot)$ (convex and increasing), the optimal solution can only coincide with the \textit{actual mapping} at some event, namely, $e_j^{(m)}$, $d_i^{(m)}$ or $q_k^{(m)}$.

%Note that if the \textit{actual mapping} and the data departure curve $D^{(m)}(t)$ coincide in at any other time instant, a rate change is necessary to fulfill the constraints and, hence, Corollary \ref{C:Piecewise_QOS} would not be satisfied, see Figure \ref{fig:mappings} for a graphical representation.

A similar approach can be done to map the minimum energy expenditure curve $E_{min}^{(m)}(t)$ to the data domain. In this case, since overflows may only be produced at energy arrival events, it is only necessary to map the lower bound in the expended energy to the data domain at these time instants. For the rest of time instants, the time intervals between energy arrivals, a constant value is assigned without loss of generality, hence, obtaining $D_{E_{min}}^{(m)}(t)$.

Figure \ref{fig:EnergyMapping_QOS} shows a representation of the problem once the energy constraint is mapped to the data domain.
Now the problem is simplified, since data and energy  constraints can be merged in a single
constraint that, at every time instant, is the most restrictive of the two
constraints, i.e.,
\begin{equation}
%\label{}
 D_{max}^{(m)}(t) = \min \{D_A^{(m)}(t), D_{B_A}^{(m)}(t) \}.
\end{equation}
Similarly, the lower constraint is 
\begin{equation}
%\label{}
D_{min}^{(m)}(t) = \max \{D_{QoS}^{(m)}(t),
 D_{E_{min}}^{(m)}(t) \}.
\end{equation}

\begin{figure}
\centering
\includegraphics[width=0.75\columnwidth]{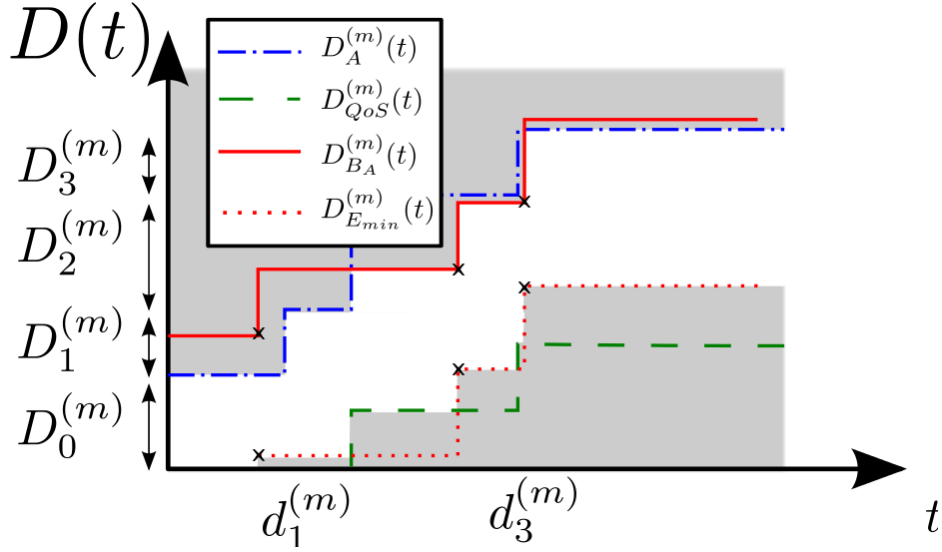}
\caption{Mapping of the \ac{ECC} and minimum energy expenditure to the data domain.}
\label{fig:EnergyMapping_QOS}
\end{figure}

%Then, $D_{max}^{(m)}(t)$ and $D_{min}^{(m)}(t)$ are staircase functions that change its value at time instants $m_i$

%Due to Corollary \ref{C:Piecewise_QOS}, we know and Lemmas \ref{L:ui}, \ref{L:Si_overflow}, \ref{L:Si_NoOverflow}, and, \ref{L:qi}, the following corollary can be derived:

%In the following, $D(t)$ is said to be feasible between $(t_1,t_2)$ if $ D_{min}^{(m)}(t) \leq D(t) \leq D_{max}^{(m)}(t)$, $\forall t \in (t_1,t_2)$.

Note that $D_{max}^{(m)}(t)$ and $D_{min}^{(m)}(t)$ are only valid within an algorithm iteration and that it can occur that $D_{max}^{(m)}(t) < D_{min}^{(m)}(t)$. This happens in the following situations: (i.) The node has to transmit a certain amount of data in order not to overflow the battery, however, this data is not still available, i.e, $ D_{E_{min}}^{(m)}(t) > D_A^{(m)}(t)$. (ii.) The node has to transmit a certain amount of data in order to satisfy the \ac{QoS} constraint, however, it does not have enough energy available to do so\ie $D_{B_A}^{(m)}(t) < D_{QoS}^{(m)}(t)$. Note that the situation (ii.) occurs when the problem does not have a feasible solution. As mentioned before, the aim of this section is to model how rate changes are produced when the problem indeed has a solution (at least up to the current algorithm iteration) and, hence, we will focus on situation (i.) where an overflow of the battery is produced.

%Since $E_{min}(t)$ is not a constraint of the problem \eqref{Eq:P_EA_DA_Cmax_QOS}-\eqref{Eq:P_EA_DA_Cmax_QOS_end}, it is permitted by the problem that $D(t) < D_{E_{min}}(t)$, however, it has been proved to be suboptimal. Then, the optimal solution transmits at constant rate up to the crossing point of $D_{E_{min}}(t)$ and $D_A(t)$. Then, the rate changes to zero until the following data arrival. In this rate change,  $D_{E_{min}}(t)$ is recalculated because it comes from the energy domain which makes that the optimal solution $D^\star(t)$ does not cross $D_{min}^{(m)}(t)$. The situation in (ii.) is a bit more tricky, for instance, in Figure \ref{fig:qos_proof}, one could think that the problem does not have a solution since, at $t = q_1$, the data that can be transmitted by using the accumulated battery is smaller than the data required to satisfy \ac{QoS}. However, this is not true because we have used the effective mapping, $D_{B_A}(t)$, instead of the actual mapping of each of the points $\bar D_{B_A}(t)$. Then, we can compute the value of the mapping just at the point $t = q_1$, obtaining $\bar D_{B_A}(q_1)$, which is represented as a red cross in Figure \ref{fig:qos_proof}. If we have that $D_{QoS}(q_1) \leq \bar D_{B_A}(q_1)$, the problem has solution, which is modeled in the second point of Lemma \ref{L:change_Zmin_l}. Otherwise, the problem does not have a solution, as summarized in the following lemma: 

Let us define the sets of time instants at which $D_{max}^{(m)}(t)$ and $D_{min}^{(m)}(t)$ have discontinuities as $\mathcal{Z}_{max}^{(m)}=\{ t \: | \: D_{max}^{(m)}(t^-) \neq D_{max}^{(m)}(t^+) \}$ and $\mathcal{Z}_{min}^{(m)}=\{ t \: | \: D_{min}^{(m)}(t^-) \neq D_{min}^{(m)}(t^+) \}$, respectively. Remember that, due to Corollary \ref{C:Piecewise_QOS}, we know that $D^\star(t)$ is constant between events defined according to $\mathcal{Z}_{max}^{(m)}$ and $\mathcal{Z}_{min}^{(m)}$. In  Lemmas \ref{L:change_Zmax}, \ref{L:change_Zmin_be}, and \ref{L:change_overflow}, we model the behavior of the optimal solution when the rate changes at a time instant where a single event is produced. Similarly, Lemmas \ref{L:change_Zmax_Zmin_be} and \ref{L:change_Zmax_Zmin_l} describe the behavior of the optimal solution when the rate changes at a time instant where two events are produced. The proofs of these lemmas are given in Appendix \ref{A:rateChanges}.
%As we will see in next section, this mapping is used in order to ease the computation of $D^\star(t)$. 

%Once $D_{min}^{(m)}(t)$ is obtained, the problem is similar to the one studied in \cite{zafer_calculus_2009}, where their $D_{min}^{(m)}(t)$ was determined by the QoS constraints. We have two main differences: (i.) The constraints must be recalculated at every rate/power change since they depend on the chosen energy expenditure. (ii.) $D_{min}^{(m)}(t)$ can be greater than $D_{max}^{(m)}(t)$.
%Once the energy and data constraints have been merged to the single data domain, we can model the behavior of $D^\star(t)$ with the following lemma:

\begin{lemma}
\label{L:change_Zmax}
If a rate change is produced at a certain time instant $\ell_m$ such that $\ell_m \in \mathcal{Z}_{max}^{(m)}$ and $\ell_m \notin \mathcal{Z}_{min}^{(m)}$, then $D^{\star^{(m)}}(\ell_m) = D_{max}^{(m)}(\ell_m^-)$ and the rate increases, $r_m < r_{m+1}$.
\end{lemma}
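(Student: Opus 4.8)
The plan is to prove the two assertions—first the contact condition $D^{\star^{(m)}}(\ell_m) = D_{max}^{(m)}(\ell_m^-)$, and then the strict rate increase $r_m < r_{m+1}$—by a single exchange (smoothing) argument that replaces a kink of $D^{\star^{(m)}}$ at $\ell_m$ by the straight chord over a small interval, using the convexity of $g(\cdot)$ to lower the consumed energy and the upward jump of $D_{max}^{(m)}$ to guarantee feasibility. Throughout I would use that $D^{\star^{(m)}}$ is continuous and non-decreasing, respects $D_{min}^{(m)}(t) \leq D^{\star^{(m)}}(t) \leq D_{max}^{(m)}(t)$, and changes slope only at events (Corollary \ref{C:Piecewise_QOS}), and that $\ell_m \notin \mathcal{Z}_{min}^{(m)}$ means $D_{min}^{(m)}$ is constant on a neighbourhood of $\ell_m$, while $\ell_m \in \mathcal{Z}_{max}^{(m)}$ means $D_{max}^{(m)}$ jumps upward there.

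First I would establish the contact condition by contradiction. Suppose the kink at $\ell_m$ occurs while the curve is strictly interior to the feasible band on a two-sided neighbourhood, i.e.\ $D_{min}^{(m)}(\ell_m) < D^{\star^{(m)}}(\ell_m) < D_{max}^{(m)}(\ell_m^-)$. A convex kink ($r_m<r_{m+1}$) lies below its chord, so smoothing pulls the curve up, which is feasible because there is room below $D_{max}^{(m)}$; a concave kink ($r_m>r_{m+1}$) lies above its chord, so smoothing pulls it down, which is feasible because $D_{min}^{(m)}$ is flat and strictly below. In either case Jensen's inequality applied to the convex $g(\cdot)$ strictly reduces $\int g(r)\,\d t$ for the same endpoints, contradicting optimality. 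Hence the kink must ride a binding constraint; but touching the flat lower bound $D_{min}^{(m)}$ would force the left rate to be zero (the curve is trapped at the flat value on the left), turning any admissible kink into a smoothable convex one, so the binding constraint cannot be $D_{min}^{(m)}$. Since $D^{\star^{(m)}}$ is continuous and bounded above by $D_{max}^{(m)}$, contact with the upper bound yields $D^{\star^{(m)}}(\ell_m) = D_{max}^{(m)}(\ell_m^-)$.

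Next I would prove $r_m < r_{m+1}$. Because a genuine rate change is assumed, it suffices to exclude $r_m > r_{m+1}$. In that concave case $D^{\star^{(m)}}$ lies strictly above the chord joining its values at $\ell_m-\delta$ and $\ell_m+\delta'$ for small $\delta,\delta'>0$; replacing the curve by this chord preserves the endpoints and total transmitted bits and strictly lowers the energy by Jensen. This chord stays below $D^{\star^{(m)}} \leq D_{max}^{(m)}$, so the upper bound holds automatically, and it remains above the locally flat $D_{min}^{(m)}$ for a sufficiently small window, so the lower bound holds as well; the resulting energy saving contradicts optimality. Thus $r_m > r_{m+1}$ is impossible and the rate strictly increases, the feasibility of the larger $r_{m+1}$ being exactly what the post-jump room $D_{max}^{(m)}(\ell_m^+) > D^{\star^{(m)}}(\ell_m)$ provides.

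The main obstacle I anticipate is the feasibility bookkeeping in the smoothing step. Specifically, the downward perturbation before $\ell_m$ must not pierce $D_{min}^{(m)}$, which relies on a strict gap $D_{max}^{(m)}(\ell_m^-) > D_{min}^{(m)}(\ell_m)$ at an interior, solvable instant—precisely the regime where situation (ii.) of the infeasible case is excluded—and on choosing the smoothing width small enough to avoid the neighbouring discontinuities in $\mathcal{Z}_{max}^{(m)} \cup \mathcal{Z}_{min}^{(m)}$. A further point of care is that the smoothing interval straddles the event at $\ell_m$, so the energy reduction must be argued directly from the convexity of $g(\cdot)$ via Jensen rather than by invoking Lemma \ref{L:NoEvents_QOS}, whose hypothesis requires an event-free interval.
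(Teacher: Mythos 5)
Your proof is correct and takes essentially the same route as the paper's: a two-part contradiction in which a kink is replaced by the chord over a small interval $(\ell_m-\epsilon,\ell_m+\epsilon)$ and Jensen's inequality on the convex $g(\cdot)$ strictly reduces the energy, first forcing the contact $D^{\star^{(m)}}(\ell_m) = D_{max}^{(m)}(\ell_m^-)$ and then excluding a rate decrease. Your extra bookkeeping (ruling out contact with the flat lower bound via the forced zero left rate, and noting that Lemma \ref{L:NoEvents_QOS} cannot be invoked across the event so Jensen must be applied directly) only makes explicit feasibility details that the paper's proof leaves implicit.
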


%\begin{proof}
%See Appendix \ref{A:change_Zmax}.
%\end{proof}

\begin{lemma}
\label{L:change_Zmin_be}
If a rate change is produced at the time instant $\ell_m$ such that $\ell_m \notin \mathcal{Z}_{max}^{(m)}$, $\ell_m\in \mathcal{Z}_{min}^{(m)}$ and $D_{max}^{(m)}(\ell_m) \geq D_{min}^{(m)}(\ell_m^+)$, then $D^{\star^{(m)}}(\ell_m) = D_{min}^{(m)}(\ell_m^+)$ and the rate decreases, $r_m > r_{m+1}$.
\end{lemma}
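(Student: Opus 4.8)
The plan is to fix an optimal solution, isolate the two epochs adjacent to the kink at $\ell_m$, and run a local \emph{corner-sliding} exchange argument driven by the convexity of $g(\cdot)$, exactly in the spirit of Lemma~\ref{L:NoEvents_QOS} but now keeping the two outer endpoints fixed so that both the total transmitted data and the completion time $T$ are preserved. By Corollary~\ref{C:Piecewise_QOS}, $D^{\star^{(m)}}(t)$ is piecewise linear with a genuine slope change at $\ell_m$; I write $r_m$ and $r_{m+1}$ for the rates of the epochs ending and starting at $\ell_m$, $\ell_m$ and $\ell_{m+1}$ for their lengths, $c := D^{\star^{(m)}}(\ell_m)$ for the corner height, and $y'$ for the (fixed) value at the end of epoch $m+1$. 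Since the departure curve is continuous and must dominate the lower bound immediately after $\ell_m$, feasibility already forces $c \ge D_{min}^{(m)}(\ell_m^{+})$; the goal is to upgrade this to equality and to establish $r_m>r_{m+1}$. I will use that $D_{min}^{(m)}$ is non-decreasing, so its discontinuity at $\ell_m\in\mathcal Z_{min}^{(m)}$ is an upward jump, while $\ell_m\notin\mathcal Z_{max}^{(m)}$ makes $D_{max}^{(m)}$ continuous at $\ell_m$.

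The core computation is to express the energy spent on the two epochs as a function of the corner height alone, $\Phi(c) = \ell_m\, g(c/\ell_m) + \ell_{m+1}\, g\big((y'-c)/\ell_{m+1}\big)$. As a sum of convex functions, $\Phi$ is convex in $c$ and, by Jensen's inequality (the same mechanism invoked in Lemma~\ref{L:NoEvents_QOS}), is minimized precisely at the rate-equalizing height $c^\star=\ell_m y'/(\ell_m+\ell_{m+1})$, where $r_m=r_{m+1}$. Hence, whenever $r_m\neq r_{m+1}$, the energy strictly decreases as the corner is slid toward $c^\star$. This perturbation leaves $T$ unchanged, so by the energy tie-breaker of \eqref{Eq:P_EA_DA_Cmax_QOS} optimality requires that the corner cannot be moved toward $c^\star$ without violating a bound.

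I would then close the argument in two steps. First, to obtain $r_m>r_{m+1}$: suppose instead $r_m<r_{m+1}$, i.e. $c<c^\star$, so reducing energy means raising $c$ toward $c^\star$; the obstruction can only be the upper bound. If $c<D_{max}^{(m)}(\ell_m)$, continuity of $D_{max}^{(m)}$ at $\ell_m$ leaves room to raise the corner, contradicting optimality; the borderline case $c=D_{max}^{(m)}(\ell_m)$ is excluded separately, because a continuous upper bound that is already touched forces $r_{m+1}=0\le r_m$, again contradicting $r_m<r_{m+1}$. Since a rate change does occur, $r_m=r_{m+1}$ is impossible, so $r_m>r_{m+1}$, which is the second claim; here the hypothesis $D_{max}^{(m)}(\ell_m)\ge D_{min}^{(m)}(\ell_m^{+})$ guarantees we are in the feasible regime rather than situation (ii). Second, with $r_m>r_{m+1}$ (so $c>c^\star$), reducing energy means lowering $c$ toward $c^\star$, and the only admissible obstruction is the lower bound; optimality then pins $c=D_{min}^{(m)}(\ell_m^{+})$, which is the first claim.

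The step I expect to be the main obstacle is verifying that the corner perturbation stays \emph{globally} feasible, not merely at $\ell_m$. The clean part is that the merged bounds $D_{max}^{(m)}$ and $D_{min}^{(m)}$ are constant between events, so a small displacement of the corner cannot breach them on the epoch interiors whenever the optimal linear pieces are strictly interior to those bounds away from the delimiting events. The delicate part is ruling out a bound becoming active at an intermediate event lying inside an epoch, and handling the degenerate coincidence $D_{max}^{(m)}(\ell_m)=D_{min}^{(m)}(\ell_m^{+})$, where the corner is simultaneously pinned to both bounds and one must argue directly that $r_{m+1}=0<r_m$ so that both conclusions still hold. Managing these boundary cases, rather than the convexity computation itself, is where the care is needed.
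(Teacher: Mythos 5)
Your underlying mechanism (a Jensen-type exchange at the corner, exploiting convexity of $g$) is exactly the paper's: the paper proves this lemma by rerunning the two-part contradiction argument of Lemma \ref{L:change_Zmax} with the discontinuity moved to the lower constraint, and your conclusions, including the borderline treatment of $c=D_{max}^{(m)}(\ell_m)$ forcing $r_{m+1}=0$, agree with it. The genuine gap is precisely the one you flag and then leave unresolved: your perturbation rotates the two \emph{entire} epochs about their fixed outer endpoints, so the perturbed curve differs from $D^{\star^{(m)}}$ everywhere on both epochs, and its feasibility requires that $D^{\star^{(m)}}$ nowhere \emph{grazes} $D_{max}^{(m)}$ (when raising $c$) or $D_{min}^{(m)}$ (when lowering $c$) at events interior to those epochs. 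Nothing proved excludes such grazing contacts: the buffer can empty exactly at a data-arrival instant in the middle of an epoch, $D^{\star^{(m)}}(d_i^{(m)})=D_A^{(m)}(d_i^{(m)})$ with no rate change, and then an arbitrarily small upward rotation of the first segment is infeasible even though the corner itself has slack, $c<D_{max}^{(m)}(\ell_m)$. The analogous problem occurs in the energy domain, since raising $c$ increases the cumulative energy spent throughout the first epoch, which can violate the energy causality constraint at an interior energy arrival that the original curve meets with equality. So the step ``the obstruction can only be the bound at $\ell_m$'' fails as written, and your closing remark that this is ``where the care is needed'' concedes the proof is not complete.

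The repair is the paper's localization, which makes your global corner-slide unnecessary: perturb only on $(\ell_m-\epsilon,\ell_m+\epsilon)$, replacing the corner by the chord joining $D^{\star^{(m)}}(\ell_m-\epsilon)$ and $D^{\star^{(m)}}(\ell_m+\epsilon)$. For $\epsilon$ small enough the window contains no event other than $\ell_m$ itself, so $D_{max}^{(m)}$ is constant across it (using $\ell_m\notin\mathcal{Z}_{max}^{(m)}$) and $D_{min}^{(m)}$ is constant on each side of its single upward jump at $\ell_m$. If $r_m<r_{m+1}$, the chord lies above the corner path and below $D^{\star^{(m)}}(\ell_m+\epsilon)\leq D_{max}^{(m)}$, hence is always feasible and strictly saves energy by Jensen, ruling out a rate increase outright; if $r_m>r_{m+1}$ and $D^{\star^{(m)}}(\ell_m)>D_{min}^{(m)}(\ell_m^{+})$, the chord's value at $\ell_m$ tends to $D^{\star^{(m)}}(\ell_m)$ as $\epsilon\to 0$, so for small $\epsilon$ it clears the jump and again strictly saves energy. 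Together these pin $D^{\star^{(m)}}(\ell_m)=D_{min}^{(m)}(\ell_m^{+})$ and force $r_m>r_{m+1}$ with no global-feasibility bookkeeping, since all other events, and hence all possible grazing contacts, lie outside the perturbation window.
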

%\begin{proof}
%See \ref{A:change_Zmin_be}.
%\end{proof}

\begin{lemma}
\label{L:change_overflow}
If a rate change is produced at a certain time instant $\ell_m$ such that $\ell_m \notin \mathcal{Z}^{(m)}_{max}$, $\ell_m \in \mathcal{Z}_{min}^{(m)}$ and $D_{max}^{(m)}(\ell_m) < D_{min}^{(m)}(\ell_m^+)$, then an overflow of the battery is produced at  $\ell_m$, $D^{\star^{(m)}}(\ell_m) =D_{max}^{(m)}(\ell_m)$ and the rate is zero until the next data arrival event, $ r_{m+1} = 0$.
\end{lemma}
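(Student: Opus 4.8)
The plan is to reduce the claim to three ingredients that are already available: the feasibility dichotomy between situations (i.) and (ii.) set up just before the lemma, Lemma~\ref{Result:NoOverflow} (overflows occur only on an empty buffer), and the \ac{DCC} upper bound $D^{\star^{(m)}}(t)\le D_A^{(m)}(t)$. The three assertions of the lemma then follow in order: that an overflow is produced, that $D^{\star^{(m)}}(\ell_m)=D_{max}^{(m)}(\ell_m)$, and that $r_{m+1}=0$.

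First I would show that an overflow is unavoidable at $\ell_m$. Since the current iteration has passed the feasibility test of Lemma~\ref{L:Solution}, situation (ii.) is excluded, so the discontinuity of $D_{min}^{(m)}$ at $\ell_m\in\mathcal{Z}_{min}^{(m)}$ that makes $D_{min}^{(m)}(\ell_m^+)>D_{max}^{(m)}(\ell_m)$ cannot originate from the \ac{QoS} term and must come from the overflow-avoidance term, i.e. $D_{min}^{(m)}(\ell_m^+)=D_{E_{min}}^{(m)}(\ell_m^+)$, matching situation (i.). Any valid curve obeys $D^{\star^{(m)}}(\ell_m)\le D_{max}^{(m)}(\ell_m)<D_{E_{min}}^{(m)}(\ell_m^+)$, so the energy expended up to $\ell_m$ is strictly below $E_{min}^{(m)}(\ell_m^+)=\{B_A^{(m)}(\ell_m^+)-C_{max}\}^+$, the minimum expenditure required to keep the battery from exceeding $C_{max}$ at this energy arrival; by the definition of $E_{min}^{(m)}$ this forces a battery overflow at $\ell_m$.

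Having established the overflow, I would invoke Lemma~\ref{Result:NoOverflow} to conclude that the buffer must be empty, $D^{\star^{(m)}}(\ell_m)=D_A^{(m)}(\ell_m)$. Combining this with the definitional inequalities $D^{\star^{(m)}}(\ell_m)\le D_{max}^{(m)}(\ell_m)\le D_A^{(m)}(\ell_m)$, the second holding because $D_{max}^{(m)}=\min\{D_A^{(m)},D_{B_A}^{(m)}\}$, squeezes all three quantities together and yields $D^{\star^{(m)}}(\ell_m)=D_{max}^{(m)}(\ell_m)=D_A^{(m)}(\ell_m)$; this is the second assertion and sidesteps having to argue separately that the energy bound $D_{B_A}^{(m)}$ is slack. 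For the rate after $\ell_m$, I would note that the buffer is empty and $D_A^{(m)}$ stays constant until the next data arrival, so the \ac{DCC} $D^{\star^{(m)}}(t)\le D_A^{(m)}(t)$ together with monotonicity of $D^{\star^{(m)}}$ forces it to remain constant on that interval; by Corollary~\ref{C:Piecewise_QOS} the rate does not change at any intervening energy or quality event, whence $r_{m+1}=0$ until the next data arrival.

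The main obstacle I anticipate is the first step: cleanly tying the \emph{data-domain} jump of $D_{E_{min}}^{(m)}$ back to an \emph{energy-domain} overflow, i.e. justifying that $D^{\star^{(m)}}(\ell_m)<D_{E_{min}}^{(m)}(\ell_m^+)$ is equivalent to $E^{\star^{(m)}}(\ell_m)<E_{min}^{(m)}(\ell_m^+)$. This rests on the monotonicity of $g(\cdot)$ and on the fact that the \emph{effective mapping} $D_{E_{min}}^{(m)}$ coincides with the \emph{actual mapping} at the energy-arrival event $\ell_m$, so no slack is introduced at that point; I would spell this correspondence out carefully and, in the same breath, record why excluding situation (ii.) is legitimate, namely that the algorithm only proceeds on iterations passing the Lemma~\ref{L:Solution} check (and $D_{QoS}^{(m)}\le D_A^{(m)}$), so the only route to $D_{max}^{(m)}(\ell_m)<D_{min}^{(m)}(\ell_m^+)$ in a still-feasible iteration is through $D_{E_{min}}^{(m)}$.
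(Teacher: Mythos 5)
Your proposal is correct and follows essentially the same route as the paper's own proof: both use the feasibility check of Lemma \ref{L:Solution} to rule out a QoS origin for the jump in $D_{min}^{(m)}$ at $\ell_m$ (so it must come from $D_{E_{min}}^{(m)}$ at an energy arrival), conclude that the forced shortfall $D^{\star^{(m)}}(\ell_m) < D_{E_{min}}^{(m)}(\ell_m^+)$ produces an overflow, and then invoke Lemma \ref{Result:NoOverflow} to obtain the empty buffer, hence $D^{\star^{(m)}}(\ell_m)=D_{max}^{(m)}(\ell_m)$ and $r_{m+1}=0$ until the next data arrival. Your write-up merely fills in two points the paper leaves implicit: the monotonicity-of-$g(\cdot)$ translation between the data-domain and energy-domain shortfalls (where the paper says ``it is obvious''), and the squeeze $D^{\star^{(m)}}(\ell_m)\le D_{max}^{(m)}(\ell_m)\le D_A^{(m)}(\ell_m)$ in place of the paper's direct assertion that $D_{max}^{(m)}(\ell_m)=D_A^{(m)}(\ell_m)$.
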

%\begin{proof}
%See Appendix \ref{A:Zmin_l}.
%\end{proof}

%\begin{itemize}
%If $t_x \in \mathcal{Z}_{max}$ and $t_x \in \mathcal{Z}_{min}$, then either $D^\star(t_x) = D_{max}(t_x)$ and the rate increases, or $D^\star(t_x) = D_{min}(t_x)$ and the rate decreases. 
%\end{itemize}
%\begin{figure}[t]
%\centering
%\includegraphics[width=\columnwidth]{Figures/qos_proof}
%\caption{Real mapping to the energy domain at a quality requirement event.}
%\label{fig:qos_proof}
%\end{figure}

\begin{lemma}
\label{L:change_Zmax_Zmin_be}
If a rate change is produced at a certain time instant $\ell_m$ such that $\ell_m \in \mathcal{Z}_{max}^{(m)}$ and $\ell_m \in \mathcal{Z}_{min}^{(m)}$, and $D_{max}^{(m)}(\ell_m^-) \ge D_{min}^{(m)}(\ell_m^+)$, then either $D^{\star^{(m)}}(\ell_m) = D_{max}^{(m)}(\ell_m^-)$ and the rate increases, or $D^{\star^{(m)}}(\ell_m) = D_{min}^{(m)}(\ell_m^+)$ and the rate decreases. 
\end{lemma}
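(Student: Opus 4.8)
The plan is to reduce this two-event situation to the two single-event results already proved, with the only genuinely new ingredient being the determination of the interval of admissible values that the simultaneous discontinuities leave for $D^{\star^{(m)}}(\ell_m)$.

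First I would pin down this admissible interval. Since $D_{max}^{(m)} = \min\{D_A^{(m)}, D_{B_A}^{(m)}\}$ and $D_{min}^{(m)} = \max\{D_{QoS}^{(m)}, D_{E_{min}}^{(m)}\}$ are monotone non-decreasing right-continuous staircase functions, at the common discontinuity $\ell_m \in \mathcal{Z}_{max}^{(m)} \cap \mathcal{Z}_{min}^{(m)}$ both jump upward, so $D_{max}^{(m)}(\ell_m^-) < D_{max}^{(m)}(\ell_m^+)$ and $D_{min}^{(m)}(\ell_m^-) < D_{min}^{(m)}(\ell_m^+)$. Because any feasible departure curve is continuous in $t$ and must obey the bounds on both sides of $\ell_m$, continuity at $\ell_m$ together with the binding left limit of the upper bound and the binding right limit of the lower bound yields $D^{\star^{(m)}}(\ell_m) \le D_{max}^{(m)}(\ell_m^-)$ and $D^{\star^{(m)}}(\ell_m) \ge D_{min}^{(m)}(\ell_m^+)$. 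Hence $D^{\star^{(m)}}(\ell_m) \in [\,D_{min}^{(m)}(\ell_m^+),\, D_{max}^{(m)}(\ell_m^-)\,]$, an interval that is nonempty exactly under the hypothesis $D_{max}^{(m)}(\ell_m^-) \ge D_{min}^{(m)}(\ell_m^+)$.

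Next I would show the optimum cannot lie in the open interior of that interval. By Corollary \ref{C:Piecewise_QOS} the curve is a straight segment of slope $r_m$ to the left of $\ell_m$ and of slope $r_{m+1}$ to the right, and by assumption a rate change occurs, so $r_m \neq r_{m+1}$. If $D^{\star^{(m)}}(\ell_m)$ were strictly interior, then on a punctured neighbourhood of $\ell_m$ both bounds are constant and the curve stays strictly inside them; straightening the kink on $(\ell_m-\epsilon,\ell_m+\epsilon)$ into a single segment would, for small $\epsilon$, remain feasible and, by the strict convexity of $g(\cdot)$ (the very Jensen argument of Lemma \ref{L:NoEvents_QOS}), strictly reduce the consumed energy at the same completion time, contradicting optimality. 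Therefore $D^{\star^{(m)}}(\ell_m)$ must equal one of the two endpoints. The rate direction at each endpoint then follows from the feasibility of the same straightening: at $D^{\star^{(m)}}(\ell_m)=D_{max}^{(m)}(\ell_m^-)$ a straightening is admissible whenever $r_{m+1}\le r_m$, so a genuine rate change forces $r_{m+1}>r_m$, which is precisely the mechanism of Lemma \ref{L:change_Zmax}; symmetrically, at $D^{\star^{(m)}}(\ell_m)=D_{min}^{(m)}(\ell_m^+)$ a straightening is admissible unless $r_{m+1}<r_m$, so the rate must decrease, matching Lemma \ref{L:change_Zmin_be}. Combining the two endpoints gives the stated dichotomy.

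The main obstacle is the interior-exclusion and direction-fixing step, which hinge on the same delicate bookkeeping: one must check that the local straightening across $\ell_m$ stays feasible \emph{despite} both bounds jumping exactly at $\ell_m$, verifying for the relevant sign of $r_{m+1}-r_m$ that the straightened value just to one side of $\ell_m$ does not rise above $D_{max}^{(m)}(\ell_m^-)$ nor fall below $D_{min}^{(m)}(\ell_m^+)$. This sign analysis is what simultaneously rules out interior optima and pins the direction of the rate change, so it must be carried out cleanly for each endpoint. The degenerate case $D_{max}^{(m)}(\ell_m^-)=D_{min}^{(m)}(\ell_m^+)$, where the two endpoints coincide, is then absorbed automatically by the \textbf{either/or} form of the conclusion.
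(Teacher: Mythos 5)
Your proof is correct and takes essentially the same route as the paper, whose own proof of Lemma \ref{L:change_Zmax_Zmin_be} simply observes that the two-event case is the union of Lemmas \ref{L:change_Zmax} and \ref{L:change_Zmin_be} applied at the same time instant, so the perturbation (straightening/Jensen) arguments already given there carry over. Your write-up just makes explicit what that one-line reduction leaves implicit---the admissible interval $\left[D_{min}^{(m)}(\ell_m^+),\, D_{max}^{(m)}(\ell_m^-)\right]$, the exclusion of its interior, and the endpoint-by-endpoint direction analysis---so it is a more detailed rendering of the same argument.
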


%\begin{proof}
%See \ref{A:change_Zmax_Zmin_be}.
%\end{proof}

\begin{lemma}
\label{L:change_Zmax_Zmin_l}
If a rate change is produced at a certain time instant $\ell_m$ such that $\ell_m \in \mathcal{Z}_{max}^{(m)}$ and $\ell_m \in \mathcal{Z}_{min}^{(m)}$, and $D_{max}^{(m)}(\ell_m^-) < D_{min}^{(m)}(\ell_m^+)$, then an overflow of the battery is produced at  $\ell_m$, $D^{\star^{(m)}}(\ell_m) =D_{max}^{(m)}(\ell_m^-)$ and the rate can either increase or decrease.
\end{lemma}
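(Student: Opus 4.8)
The plan is to treat this as the two-event analogue of Lemma~\ref{L:change_overflow}, reusing its overflow argument together with the upper-bound-reaching argument behind Lemma~\ref{L:change_Zmax}. First I would record that the data departure curve is continuous (bits are transmitted at a finite rate), so that $D^{\star^{(m)}}(\ell_m^-)=D^{\star^{(m)}}(\ell_m)=D^{\star^{(m)}}(\ell_m^+)$. Next I would pin down which part of the lower bound is active just after $\ell_m$. Because this lemma is applied only on the feasible branch of the algorithm (Lemma~\ref{L:Solution} has already been checked at the event), the hard constraint $D_{QoS}^{(m)}(\ell_m^+)\le D^{\star^{(m)}}(\ell_m)$ must hold; otherwise continuity together with $D_{max}^{(m)}(\ell_m^-)<D_{min}^{(m)}(\ell_m^+)$ would give $D^{\star^{(m)}}(\ell_m)\le D_{max}^{(m)}(\ell_m^-)<D_{QoS}^{(m)}(\ell_m^+)\le D^{\star^{(m)}}(\ell_m)$, a contradiction. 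Hence we are in situation~(i): the \ac{QoS} bound is not the binding one and $D_{min}^{(m)}(\ell_m^+)=D_{E_{min}}^{(m)}(\ell_m^+)$.

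The overflow then follows quickly. From feasibility before the event, $D^{\star^{(m)}}(\ell_m^-)\le D_{max}^{(m)}(\ell_m^-)$, so by continuity $D^{\star^{(m)}}(\ell_m)\le D_{max}^{(m)}(\ell_m^-)<D_{E_{min}}^{(m)}(\ell_m^+)$. Since the $m$-th epoch is transmitted at constant rate, this data-domain inequality is equivalent to the energy-domain inequality $E^{(m)}(\ell_m)<E_{min}^{(m)}(\ell_m^+)$; as $D_{E_{min}}^{(m)}(\ell_m^+)>0$ forces $E_{min}^{(m)}(\ell_m^+)>0$, i.e. $B_A^{(m)}(\ell_m^+)>C_{max}$, the battery content $B_A^{(m)}(\ell_m^+)-E^{(m)}(\ell_m)$ exceeds $C_{max}$ and an overflow is indeed produced at $\ell_m$, which is the first claim. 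For the value of the curve I would invoke Lemma~\ref{Result:NoOverflow}: an overflow can only occur with an empty buffer, so $D^{\star^{(m)}}(\ell_m^-)=D_A^{(m)}(\ell_m^-)$; combined with the energy-causality bound $D_A^{(m)}(\ell_m^-)=D^{\star^{(m)}}(\ell_m^-)\le D_{B_A}^{(m)}(\ell_m^-)$ this yields $D_{max}^{(m)}(\ell_m^-)=\min\{D_A^{(m)}(\ell_m^-),D_{B_A}^{(m)}(\ell_m^-)\}=D_A^{(m)}(\ell_m^-)$, and therefore $D^{\star^{(m)}}(\ell_m)=D_{max}^{(m)}(\ell_m^-)$, the second claim.

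For the rate behaviour I would contrast the situation with Lemma~\ref{L:change_overflow}. There $\ell_m\notin\mathcal{Z}_{max}^{(m)}$, so no new arrival occurs and the buffer stays empty, forcing $r_{m+1}=0$. Here $\ell_m\in\mathcal{Z}_{max}^{(m)}$, so a data (or energy) arrival coincides with the overflow event and the buffer need not remain empty; hence $r_{m+1}$ is not forced to vanish. By Corollary~\ref{C:Piecewise_QOS} the next epoch runs at the constant rate fixed by the next most-restrictive constraint, whose endpoint may lie either above or below the straight-line extension of $r_m$, exactly as governed by the mechanisms of Lemmas~\ref{L:change_Zmax} (increase) and~\ref{L:change_Zmin_be} (decrease). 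I would therefore settle the ``can either increase or decrease'' assertion by exhibiting two minimal instances, one yielding $r_{m+1}>r_m$ and one yielding $r_{m+1}<r_m$.

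I expect the main obstacle to be the step isolating situation~(i): the quantity $D_{E_{min}}^{(m)}$ is a modelling device rather than a constraint of \eqref{Eq:P_EA_DA_Cmax_QOS}, so the argument must lean on feasibility to exclude the \ac{QoS}-binding case and on the definition of $E_{min}^{(m)}$ to convert ``transmitting below $D_{E_{min}}^{(m)}$'' into an actual overflow, all while carefully tracking one-sided limits and the simultaneity of the two coinciding events. The non-determinacy of the rate direction is the other delicate point, since it is an existence statement best confirmed by explicit scenarios rather than by a single deduction.
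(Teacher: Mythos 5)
Your proposal is correct and follows essentially the same route as the paper's own proof: both use the fact that Lemma \ref{L:Solution} has already excluded infeasibility to pin the binding lower bound to $D_{E_{min}}^{(m)}(\ell_m^+)$ (hence an overflow), both invoke Lemma \ref{Result:NoOverflow} to conclude $D^{\star^{(m)}}(\ell_m)=D_{max}^{(m)}(\ell_m^-)$, and both leave the sign of the rate change undetermined. Your write-up simply makes explicit several steps the paper asserts without derivation --- the continuity/one-sided-limit bookkeeping, the translation of $D^{(m)}(\ell_m)<D_{E_{min}}^{(m)}(\ell_m^+)$ into $E^{(m)}(\ell_m)<E_{min}^{(m)}(\ell_m^+)$, and the identification $D_{max}^{(m)}(\ell_m^-)=D_A^{(m)}(\ell_m^-)$, which the paper states directly while you recover it from the empty-buffer property --- and your promised (though not yet constructed) pair of examples would, if supplied, actually strengthen the paper's bare remark that ``nothing can be stated regarding the rate.''
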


%\begin{proof}
%\end{proof}

By using these lemmas, we are able to construct an algorithm, which is presented in next section, that iteratively finds the optimal solution, or concludes that there is no feasible solution.

\section{Optimal data departure curve construction}
\label{sec:algorithm_qos}
In this section, we describe the developed algorithm that is able to either construct $D^\star(t)$ or, alternatively, conclude that the problem in \eqref{Eq:P_EA_DA_Cmax_QOS} does not have a feasible solution. As stated in Corollary \ref{C:Piecewise_QOS}, the optimal data departure curve is a piece-wise linear function. As previously explained, the developed algorithm follows an iterative process where, at the $m$-th iteration, the duration, $\ell_m$, and rate, $r_m$, of an epoch are determined.  We will focus on the explanation of the $m$-th iteration since all the other iterations follow the same approach.

As shown in Figure \ref{fig:block_diagram_alg}, the algorithm is composed by three main blocks. The first block, named $checkSolution$, determines the existence of solution in the current iteration by checking the condition in Lemma \ref{L:Solution}. If the problem does not have a solution, the algorithm ends. Otherwise, the algorithm proceeds to the subsequent blocks to determine the rate and length of the epoch.

\begin{figure}
\centering
\includegraphics[width=0.55\columnwidth]{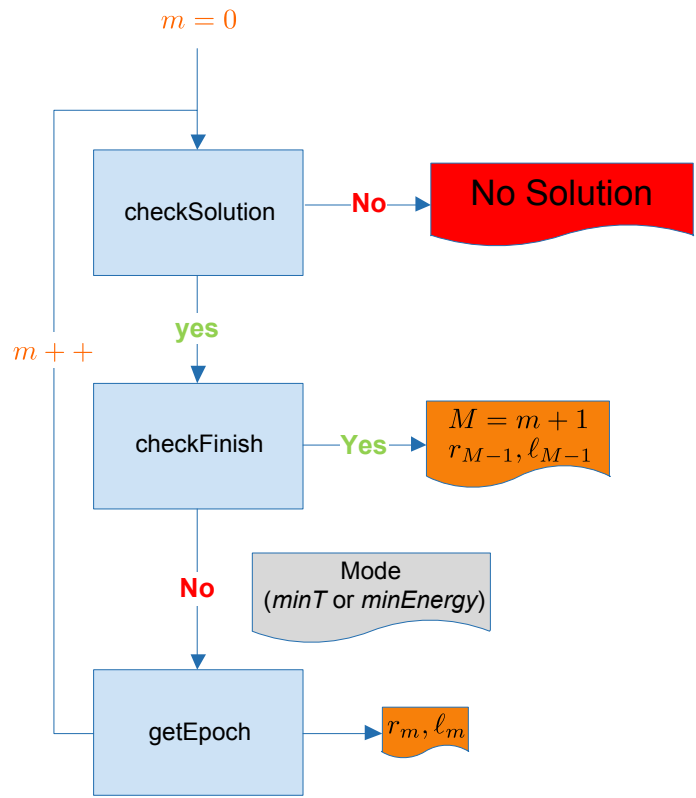}
\caption{Block diagram of the iterative algorithm.}
\label{fig:block_diagram_alg}
\end{figure}

The second block, named \textit{checkFinish}, checks whether it is possible to transmit all the remaining data by using all the available energy at constant rate. This block is necessary because $T^{(m)}$ is unknown. In case it is possible to finish in a single epoch, it is obtained that $T^{(m)} = \ell_m$ and $M = m + 1$. Then, the solution to problem \eqref{Eq:P_EA_DA_Cmax_QOS} has been found and  can be computed as
\begin{equation}
\label{Eq:Solution}
%\textstyle
D^\star(t) = \sum_{m=0}^{M-1} r_m (t-\tau_m) \: \Pi \left( \frac{t - \tau_m }{\ell_m}\right) + r_m \ell_m H(t - \tau_{m} -\ell_m),
\end{equation}
where $\Pi(\cdot)$ is the unit pulse in the interval $[0,1]$. The domain of $D^\star(t)$ is $[0, T = \sum_{m=0}^{M-1} \ell_m]$.

Otherwise, if it is not possible to finish in a single epoch, the block  \textit{checkFinish} returns the mode ($minT$ or $minEnergy$) to be used by the third block, which we name \textit{getEpoch}, to find the epoch rate and length ($r_m$ and $\ell_m$) that fulfill Lemmas \ref{L:change_Zmax}-\ref{L:change_Zmax_Zmin_l}. The mode $minT$ is used when the node already has enough energy to finish transmission, whereas, the mode $minEnergy$ is used when the node is still not able to finish transmission at any rate and, hence, the objective is to save as much energy as possible for the end of the transmission. An extended explanation of the inner behavior of each of these blocks is given in Appendix \ref{Ap:Alg}. 

Once $r_m$ and $\ell_m$ are determined, the origin of coordinates is moved to the point $(\ell_m, r_m \ell_m)$ and the variables are prepared for the new iteration. In the data domain, the iteration transmitted bits $D^{\star^{(m)}}(\ell_m) =r_m \ell_m$ are subtracted from $D_A^{(m)}(t)$ and $D_{QoS}^{(m)}(t)$, for instance,  $D_A^{(m+1)}(t) = D_A^{(m)}(t + \ell_m) - D^{\star^{(m)}}(\ell_m)$. Similarly, in the energy domain, the expended energy $E^{\star^{(m)}}(\ell_m) = g(r_m) \ell_m$ is subtracted from $B_A^{(m)}(t)$ and $E_{min}^{(m)}(t)$. Moreover, in case that transmitting at $r_m$ produces a battery overflow at time instant $\ell_m$, the amount of energy lost due to the overflow is also subtracted from these variables. Finally, the mapping to the data domain is recalculated for the iteration $m+1$ and the whole procedure starts again to determine $r_{m+1}$ and $l_{m+1}$.

There are two possible reasons for which the algorithm ends: (i.) At some iteration, Lemma \ref{L:Solution} is satisfied and, hence, the problem does not have solution. (ii) All the data has been transmitted and the optimal data departure curve has been obtained as given in \eqref{Eq:Solution}.

%block_algorithm_qos.eps

%\vspace{-0.5cm}
%\subsection{Algorithm optimality}
The algorithm optimality is summarized in the following theorem and its subsequent proof:
\begin{theorem}
\label{TH:Algorithm_QOS}
The algorithm presented in this section constructs the optimal data departure curve, $D^\star(t)$, for the problem \eqref{Eq:P_EA_DA_Cmax_QOS}.
\end{theorem}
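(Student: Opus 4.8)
The plan is to establish optimality by induction on the epoch index $m$, proving that the piece-wise linear curve assembled by the algorithm coincides with an optimal solution $D^\star(t)$ of \eqref{Eq:P_EA_DA_Cmax_QOS} on each successive epoch $[\tau_m, \tau_{m+1}]$. The backbone of the argument is that the structural lemmas proved earlier are \emph{necessary} conditions that any optimal curve must obey: by Corollary \ref{C:Piecewise_QOS} the optimal curve is piece-wise linear with rate changes only at the events $d_i$, $e_j$, and $q_k$; by Lemma \ref{Result:NoOverflow} overflows occur only when the data buffer is empty; and by Lemma \ref{L:useAllEnergy} the battery is empty at the completion time $T$. Together with the constraint mapping of Section \ref{Sec:DataMapping_QOS}, which collapses the energy and data constraints into the envelopes $D_{max}^{(m)}(t)$ and $D_{min}^{(m)}(t)$, these conditions force the optimal continuation within one epoch to agree with exactly one of the behaviours catalogued in Lemmas \ref{L:change_Zmax}--\ref{L:change_Zmax_Zmin_l}, which is precisely what the \textit{getEpoch} block implements.

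For the inductive step I would assume that $D^\star(t)$ is known and coincides with the algorithm's output on $[0,\tau_m]$, reset the origin to $(\tau_m, D^\star(\tau_m))$, and distinguish the two modes returned by \textit{checkFinish}. In the \textit{minT} mode (the node already holds enough energy to finish), I would argue by an exchange argument that the minimum-$T$ continuation must hug the upper envelope $D_{max}^{(m)}(t)$: any feasible curve lying strictly below it on a positive-length interval can be sped up without violating a constraint, strictly reducing $T$, so the optimum transmits as fast as the merged upper bound allows. In the \textit{minEnergy} mode (finishing is still impossible at any rate), the completion time is dictated by future energy arrivals rather than by the current rate, so I would show that transmitting at the least rate compatible with the lower envelope $D_{min}^{(m)}(t)$ is optimal, since spending energy earlier cannot advance $T$ but, by the convexity of $g(\cdot)$ and by Lemma \ref{Result:NoOverflow}, risks either wasting energy in an overflow or forcing a suboptimal rate later. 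In both modes the resulting epoch ends at an event and the rate change matches the appropriate case among Lemmas \ref{L:change_Zmax}--\ref{L:change_Zmax_Zmin_l}, closing the induction.

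Feasibility and termination would be handled separately. If condition \eqref{Eq:SolutionExists} of Lemma \ref{L:Solution} is met at some quality requirement event, the problem is genuinely infeasible, so the \textit{checkSolution} block terminates correctly; conversely, if the algorithm runs to completion, the curve returned by \eqref{Eq:Solution} satisfies the \ac{ECC}, \ac{DCC}, and \ac{QoS} constraints by construction of the envelopes, hence is feasible, and together with the per-epoch optimality it is optimal. Termination is immediate because, by Corollary \ref{C:Piecewise_QOS}, rate changes can occur only at the finitely many events $d_i$, $e_j$, $q_k$, so $M$ is finite and the loop halts after at most that many iterations.

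The main obstacle will be the exchange argument underpinning the per-epoch optimality, and in particular reconciling the energy-saving \textit{minEnergy} mode with the minimum-completion-time objective and handling the simultaneous-event cases of Lemmas \ref{L:change_Zmax_Zmin_be} and \ref{L:change_Zmax_Zmin_l}, where the algorithm must decide between increasing and decreasing the rate. I expect to resolve these by showing that any competing feasible curve can be continuously deformed toward the algorithm's output without increasing $T$ and, among curves sharing the same $T$, without increasing the total energy expenditure, invoking the convexity of $g(\cdot)$ for the energy comparison (as in Lemma \ref{L:NoEvents_QOS}) and the necessary structural conditions to exclude all alternative continuations.
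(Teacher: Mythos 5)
Your overall architecture mirrors the paper's: an iteration-by-iteration verification that the algorithm's output satisfies the structural lemmas (which are necessary conditions for optimality), split by the mode returned by \textit{checkFinish}, with infeasibility handled by Lemma \ref{L:Solution}. However, your exchange argument for the \textit{minT} mode contains a genuine error. You claim that ``any feasible curve lying strictly below $D_{max}^{(m)}(t)$ on a positive-length interval can be sped up without violating a constraint, strictly reducing $T$, so the optimum transmits as fast as the merged upper bound allows.'' This is false, and the curve it would construct is not the one the algorithm produces. The optimal continuation in \textit{minT} mode is the constant-rate chord $D_1^{(m)}(t)=R_{max}^{(m)}t$, which lies \emph{strictly below} $D_{max}^{(m)}(t)$ everywhere on $(0,z_{max}^{(m)})$ and touches it only at the corner $z_{max}^{(m)}$. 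Locally speeding up a curve that is below the envelope is not free: by the convexity of $g(\cdot)$ it strictly increases the energy spent to deliver the same data, and since the completion time is ultimately energy-limited (Lemma \ref{L:useAllEnergy} forces $E^\star(T)=B_A(T;T)$), wasted energy strictly \emph{increases} $T$ rather than reducing it. Worse, ``hugging'' the battery-mapping portion $D_{B_A}^{(m)}(t)$ of the envelope empties the battery at non-event instants and forces a zero-rate stall --- exactly the suboptimality the paper invokes to replace the actual mapping by the effective mapping. The paper's actual argument is different and does the needed work: it sandwiches the remaining completion time as $\hat{T}^{(m)} < T^{(m)} < T_{max}^{(m)}$ (with $T_{max}^{(m)}=D_{Tot}^{(m)}/R_{max}^{(m)}$ achievable), observes that any competing feasible curve must therefore cross $D_1^{(m)}(t)$ at some $t_y$, and notes that at the crossing both curves have delivered the same data while the constant-rate chord has spent strictly less energy (Jensen); the subsequent rate increase at $z_{max}^{(m)}$ follows because the saved energy yields $\hat{R}^{(m+1)} > \hat{R}^{(m)} > R_{max}^{(m)}$. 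An energy comparison is thus indispensable even in the time-minimizing mode, which your speed-up argument omits entirely.

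Two secondary gaps. First, in \textit{minEnergy} mode your rule ``the least rate compatible with the lower envelope'' is only an intuition; the paper's rule is the threshold rate $r_m=\inf\left(\mathbb{S}_{R_{max}}^{(m)}\right)=\sup\left(\mathbb{S}_{R_{min}}^{(m)}\right)$, and the subsequent verification of Lemmas \ref{L:change_Zmax}--\ref{L:change_Zmax_Zmin_l} is done by tracking how these sets evolve across iterations ($\mathbb{S}^{(m+1)}_{R_{min}}$ gains rates just above $r_m$, forcing $r_{m+1}\geq r_m$ at a $v_1$-type point, and symmetrically for $v_2$), which is precisely the mechanism you defer as ``the main obstacle'' in the simultaneous-event cases of Lemmas \ref{L:change_Zmax_Zmin_be} and \ref{L:change_Zmax_Zmin_l}; a deformation argument is announced but never carried out. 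Second, you never verify the terminal case: when \textit{checkFinish} ends the transmission with an even power allocation, one must check that the final epoch satisfies Lemma \ref{L:useAllEnergy} (the last rate is the unique rate, by monotonicity of $g(\cdot)$, that exhausts the accumulated battery exactly when the last bit departs); your proposal invokes Lemma \ref{L:useAllEnergy} as a necessary condition but never shows the algorithm meets it. Your termination argument (finitely many events, each ending at most one positive-length epoch) is fine and is a point the paper leaves implicit.
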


\begin{proof}
See appendix \ref{A:Proof_Th_QOS}.
\end{proof}

\vspace{-0.5cm}
\section{Results}
To the best of our knowledge there is no other algorithm in the literature that considers altogether the \ac{ECC}, \ac{DCC}, the \ac{QoS} constraint and the finite battery capacity. Therefore, in order to get some insights on the gain obtained with our proposed solution, we have developed a suboptimal ad-hoc strategy, namely, the \ac{EBS}, that tries to empty the buffers as soon as possible\ie it looks for the time instant at which the next arrival (energy or data) is produced and tries to transmit at a constant rate so that the corresponding buffer is emptied by the time of the corresponding arrival.

The left $y$-axis of Figure \ref{fig:Comparison} compares the normalized mean minimum $T$ along a total of $1000$ iterations, where at each iteration the data and energy arrivals are randomly generated following a uniform distribution.
The amount of energy in each of the packets is normalized according to the total harvested energy which varies along the $x$-axis.
The right $y$-axis shows the percentage of iterations in which there exists a feasible solution to \eqref{Eq:P_EA_DA_Cmax_QOS}.
As shown in Figure \ref{fig:Comparison}, our proposed optimal algorithm substantially reduces the mean minimum $T$. 
If feasible solutions exist the optimal algorithm finds the one that minimizes $T$, however, for some arrival profiles, the \ac{EBS} is not able to find any feasible solution in spite of its existence.

\begin{figure}
\centering
\includegraphics[width=0.75\columnwidth]{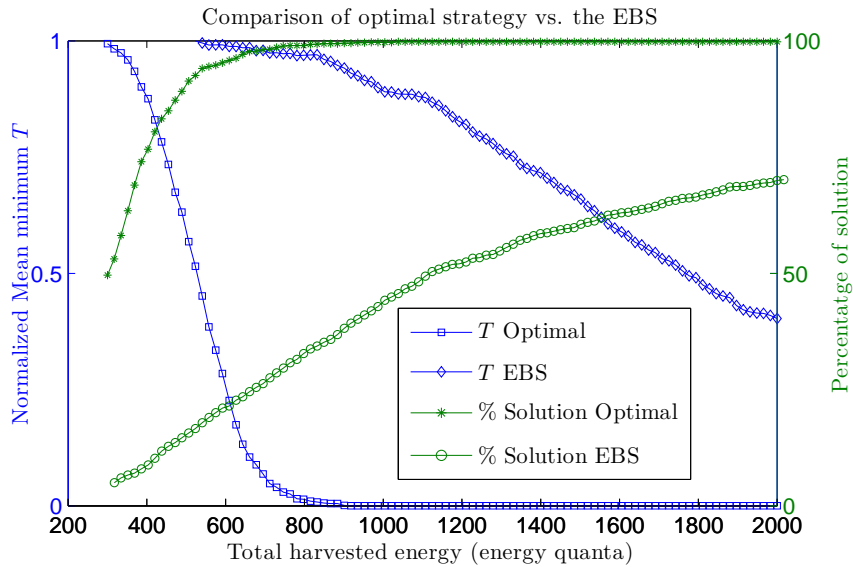}
\caption{Performance comparison of our proposed optimal algorithm with respect to the suboptimal \ac{EBS} in terms of mean minimum completion time and percentage of solutions. The lines marked with rectangles and diamonds refer to the left axis, whereas the lines marked with circles and asterisks refer to the rights axis.}
\label{fig:Comparison}
\end{figure}

Note that the \ac{EBS} performs worse than the optimal strategy because it changes the rate at every packet arrival without checking whether constant power transmission is feasible between two arrivals, which would consume less energy.
As a result of this extra energy consumption, the mean minimum $T$ is higher and the probability of finding a feasible solution decreases.

%\vspace{-0.4cm}
\section{Conclusion}
\label{sec:conclusions}
The presence of energy harvesters in wireless nodes implies a loss of optimality of the traditional transmission policies such as the well-known water-filling strategy \cite{cover_elementsIT_1991}. In this paper, the optimal transmission strategy has been obtained for wireless energy harvesting nodes with finite battery capacity that, additionally, have to fulfill some \ac{QoS} constraint. Hence, we have contributed into the decrease of the required transmission completion time and, thus, increased the overall efficiency in the use of the harvested energy. Moreover, in more technical terms, we have seen that, as far as battery does not overflow, constant rate transmission is the strategy that requires less energy to transmit a certain amount of data. However, if indeed battery overflows, transmitting at constant rate is not optimal anymore, but the optimal strategy increases the rate before the overflow until either there is no overflow, or the \ac{DCC} is reached (i.e., there is no more data to transmit). We have seen that the existence of the optimal solution depends both on the dynamics of the harvesting process and on the required \ac{QoS}. According to this, we have developed an algorithm that is able to determine whether the problem has a solution or not and, in case of having a solution, determines the optimal data transmission strategy.

\vspace{-0.3cm}
\appendices
\vspace{-0.1cm}
\section{}
\label{A:rateChanges}
\vspace{-0.1cm}
\subsection*{Proof of Lemma \ref{L:change_Zmax}}
\label{A:change_Zmax}
The proof of the lemma is divided in two parts. We first show that if a rate change is produced at $\ell_m$, then $D^{\star^{(m)}}(\ell_m) = D_{max}^{(m)}(\ell_m^-)$ and, afterwards, we show that the rate must increase.
\begin{proofPart}
\label{PF:4_1}
Let us assume that a rate change is produced at $\ell_m$ such that $D^{(m)}(\ell_m) <  D_{max}^{(m)}(\ell_m^-)$. We will show by contradiction that this cannot be optimal. Let us consider the time interval $(\ell_m -\epsilon, \ell_m + \epsilon)$ with $\epsilon$ being positive. Note that if we select a sufficiently 
small $\epsilon$, we can find a straight line with rate $ r = \frac{D^{(m)}(\ell_m +\epsilon)-D^{(m)}(\ell_m -\epsilon)}{2\epsilon}$, which still satisfies the constraints, that transmits the same amount of data while having less energy expenditure. Hence, we have proved that if the rate changes at $\ell_m$, then $D^{\star^{(m)}}(\ell_m) = D_{max}^{(m)}(\ell_m^-)$.
\end{proofPart}
\begin{proofPart}
\label{PF:4_2}
Now we prove that when the rate changes at $\ell_m$, it must increase. The procedure is the same as the one in the first part of the proof. We start by assuming that a rate decrease is optimal and then, we see that it leads to a logical contradiction. We denote $r_m$ and $r_{m+1}$ the rates before and after $\ell_m$, respectively, where $r_m > r_{m+1}$. We consider the same time interval. In this case, we can also find a straight line whose slope is $r = \frac{D^{(m)}(\ell_m +\epsilon)-D^{(m)}(\ell_m -\epsilon)}{2\epsilon}$, which satisfies energy and data constraints, that transmits the same amount of data while having less energy expenditure. Hence, we have proved by contradiction that if the rate changes, it must increase.
\end{proofPart}

\vspace{-0.3cm}
\subsection*{Proof of Lemma \ref{L:change_Zmin_be}}
\label{A:change_Zmin_be}
This proof is similar to the proof of Lemma \ref{L:change_Zmax}. The main difference is that now the discontinuity is in the lower constraint. Then, by following the same procedure we can first show that a rate change is suboptimal unless $D^{\star^{(m)}}(\ell_m) = D_{min}^{(m)}(\ell_m^+)$ and a rate decrease is produced.

\vspace{-0.3cm}
\subsection*{Proof of Lemma \ref{L:change_overflow}}
\label{A:Zmin_l}
We know that $\ell_m$ comes from an energy arrival event, otherwise, the problem would not have a feasible solution as stated in Lemma \ref{L:Solution}.
This implies that $D_{max}^{(m)}(\ell_m)=D_{A}^{(m)}(\ell_m)$ and $D_{min}^{(m)}(\ell_m)=D_{E_{min}}^{(m)}(\ell_m^+)$. Then, it is obvious that if $D_{A}^{(m)}(\ell_m) < D_{E_{min}}^{(m)}(\ell_m^+)$, an overflow is produced. Hence, the proof that $D^{\star^{(m)}}(\ell_m) =D_{max}^{(m)}(\ell_m)$ comes from Lemma \ref{Result:NoOverflow}, where we show that the optimal data departure curve satisfies that when an overflow is produced, all the data has been transmitted and, hence, the rate must change to zero until the following data arrival.
% This comes from the fact that the amount of energy lost in the overflow is always higher than the energy saved due to constant power transmission.

%The second point of the lemma characterizes the behavior of the optimal solution when $\ell_m$ comes from a quality requirement event\ie $\ell_m = q_1$. We know that, in this case, $D_{min}(\ell_m) = D_{QoS}(\ell_m)$ and that $D_{max}(\ell_m)=D_{B_A}(\ell_m)$. We recall that for simplicity we did not perform the mapping of the curve $B_A(t)$ at the time instant $\ell_m = q_1$ since this point is not an energy arrival event. If we compute the \textcolor{red}{\textit{real mapping}} of the curve at this time instant, we observe that indeed $D_{QoS}(\ell_m) < \bar D_{B_A}(\ell_m)$ since for this lemma we have assumed that the problem has a solution. An example of this is shown in Figure \ref{fig:qos_proof}, where the \textcolor{red}{\textit{real mapping}} is denoted by the red cross, note that, if we were not considering the \ac{QoS} constraints the solution would be given by the curve B. However, due to the \ac{QoS} constraint, the rate must be increased in the time interval $(0,q_1)$ and then decreased in the interval $(q_1,s_1)$ in order to have the same energy expenditure in $s_1$. In this situation, proving that  $D^\star(\ell_m) = D_{QoS}(\ell_m) = D_{min}(\ell_m)$ and that the rate must decrease is equivalent to the proof of Lemma \ref{L:change_Zmin_be}.
\vspace{-0.3cm}
\subsection*{Proof of Lemma \ref{L:change_Zmax_Zmin_be}}
\label{A:change_Zmax_Zmin_be}
This lemma states that two events are produced at time $\ell_m$. One could look at this lemma as the union of Lemmas \ref{L:change_Zmax} and \ref{L:change_Zmin_be} at the same time instant, hence, the proof has been already given in the aforementioned lemmas.

\vspace{-0.3cm}
\subsection*{Proof of Lemma \ref{L:change_Zmax_Zmin_l}}
\label{A:change_Zmax_Zmin_l}
As we pointed out, this lemma only applies when the problem has a solution. Consequently, we know that $D_{min}^{(m)}(\ell_m) = D_{E_{min}}^{(m)}(\ell_m^+)$ and that $D_{max}^{(m)}(\ell_m)=D_A^{(m)}(\ell_m^-)$. Hence, this is the overflow problem with the particularity that a data arrival is produced at $\ell_m$. From Lemma \ref{Result:NoOverflow}, the optimal solution minimizes the energy lost due to overflow and, thus, $D^{\star^{(m)}}(\ell_m) =D_{max}^{(m)}(\ell_m^-)$. However, in this situation, nothing can be stated regarding the rate in the following epoch.
\vspace{-0.1cm}
\section{}
\label{Ap:Alg}
In this appendix, a technical explanation of the algorithm is given. First, we introduce the maximum and minimum rates, a concept required to understand the second and third blocks of the algorithm that are presented afterwards.

\vspace{-0.3cm}
\subsection{Maximum and minimum rates}
\label{Sec:RateComputation}
Let $\mathbb{R}_{max}^{(m)}$ denote the set that contains the rates obtained by joining the reference point, i.e., $(0,D^{(m)}(0)=0)$\footnote{Note that the reference point is always $(0,0)$ as the origin of coordinates is moved at every iteration.}, with the discontinuities from the left of $D_{max}^{(m)}(t)$\ie the points $(z, D_{max}^{(m)}(z^-))$, $\forall z \in \mathcal{Z}_{max}^{(m)}$,  and such that the obtained curve is feasible for $t \in (0,z)$, where by feasible we mean that the curve satisfies all the constraints. Similarly, $\mathbb{R}_{min}^{(m)}$ contains the rates obtained from joining the reference point with the discontinuities from the right of $D_{min}^{(m)}(t)$\ie the points $(z, D_{min}^{(m)}(z^+))$, $\forall z \in \mathcal{Z}_{min}^{(m)}$, and such that the obtained curve is feasible in the same interval. An example of this can be seen in Figure \ref{fig:Algorithm}.

\begin{figure}
\centering
\includegraphics[width=0.75\columnwidth]{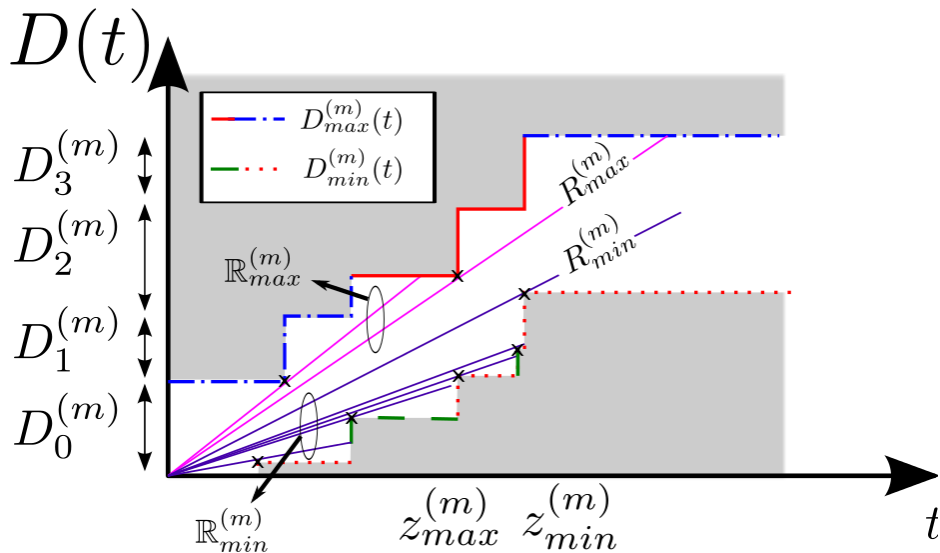}
	\caption{Graphical interpretation of the computation $R_{min}^{(m)}$ and $R_{max}^{(m)}$. The
	optimal epoch rate lies within the interval $[R_{min}^{(m)}, R_{max}^{(m)}]$.}
	\label{fig:Algorithm}
	\end{figure}
	
	Let $R_{max}^{(m)}$ denote the infimum of the set $\mathbb{R}_{max}^{(m)}$ and $R_{min}^{(m)}$  refer to the supremum of the set  $\mathbb{R}_{min}^{(m)}$. Let $z_{max}^{(m)}$ and $z_{min}^{(m)}$ denote the time instants from which $R_{max}^{(m)}$ and $R_{min}^{(m)}$ have
	been obtained.
	%MARIA INTENTAR POSAR
	%i.e., in Figure \ref{fig:Algorithm},$z_{max} = u_1$ and $z_{min}=s_2$
Then, all the rates above $R_{max}^{(m)}$ and below $R_{min}^{(m)}$ are suboptimal as they
	would require a rate change to transmit the same amount of data.\footnote{Note that $R_{max}^{(m)}$ is always greater than  $R_{min}^{(m)}$, otherwise,
	either $R_{max}^{(m)}$ or $R_{min}^{(m)}$ would not be feasible.}
 Thus, the
	optimal rate lies within the interval $[R_{min}^{(m)}, R_{max}^{(m)}]$. 
%	\vspace{-0.9cm}

	\subsection{Finish transmission at a constant rate (checkFinish)}
	\label{SS:FinishEven}
	The first step, which is presented in Function \ref{ALG:Finish}, checks whether it is possible to transmit all bits by using an even power allocation in just one epoch. If it is possible, which implies that transmission is ended, the algorithm returns the rate and length of the last epoch, otherwise, it returns the strategy or mode that will be used in order to determine the following epoch. 
	
	The function \textit{checkFinish} first checks whether by transmitting at the maximum feasible rate,
	$R_{max}^{(m)}$, it is possible to transmit all the remaining data $D_{Tot}^{(m)}$ (this is done by the subroutine $getDataInCrossing$). In case
	it is not possible, the function returns the mode $minEnergy$. Otherwise, the
	function finds the time $\hat{T}^{(m)}_0$ required to transmit the remaining data
	$D_{Tot}^{(m)}$ with the iteration initial battery $E_0^{(m)}$.\footnote{Remember that $E_0^{(m)} = B_A(\tau_m;\tau_m) - E^\star(\tau_m)$ as summarized in Table \ref{tab:Notation}.}
 Then, it computes the equivalent rate $\hat R^{(m)}$ and
	checks whether transmitting at this rate is feasible, i.e., the following two
	conditions are fulfilled: (i.) $\hat{R}^{(m)} \leq R_{max}^{(m)}$ and (ii.) $\hat{R}^{(m)} \geq
	R_{min}^{(m)}$. In case that (i.) is not fulfilled, the function returns the mode
	$minT$. If (ii.) is not met, it is checked if, by using the following energy
	arrivals, a feasible curve is obtained. Finally, in case both conditions are
	fulfilled, it is checked  whether any energy arrival has been produced in the
	time interval $(0,\hat{T}^{(m)}_0)$. In case 
	% MP: Canvi
	%there are 
	of
	%%%%%%%%%%%
	no arrivals, the algorithm ends
	and the last epoch has been found. In case there is an energy arrival in
	$(0,\hat{T}^{(m)}_0)$, the function repeats the whole process but now using the
	initial battery, $E_0^{(m)}$, and the energy of the first arrival, $E_1^{(m)}$. This process is
	repeated until (i.) becomes false or a feasible curve is found. 
		
\vspace{-0.4cm}
	\subsection{Get rate and length of the next epoch (getEpoch)}
	\label{SS:GetEpoch}
	This algorithm's block uses the parameter $mode$, which is obtained from the function $checkFinish$ as presented in Appendix \ref{SS:FinishEven}, to compute the rate and length of the following epoch. 
	%\begin{itemize}
	
	\textit{Minimize the total completion time $(mode == minT)$:} This strategy is used when both of the following conditions are satisfied: (i.) It is possible to finish the transmission at some rate $r$ with $r \leq R_{max}^{(m)}$. (ii.) The rate obtained from an even power allocation $\hat{R}^{(m)}$ is not feasible due to $\hat{R}^{(m)} > R_{max}^{(m)}$. Hence, the objective is to find the rate that allows us to finish transmission as soon as possible, without paying attention on saving power, however, without wasting it, either. In such case, the rate and duration of the epoch are $r_m=R_{max}^{(m)}$ and $\ell_m = z_{max}^{(m)}$.
	
	\textit{Minimize the energy expenditure $(mode == minEnergy)$:} This strategy is used when, due to the constraints,  transmission cannot be  finished at constant rate, e.g., the rate must be increased to satisfy the \ac{QoS} constraints. Hence, the objective is to save as much energy as possible in order to use it when ending the transmission is feasible. Note that in this situation, the problem of obtaining the following epoch is similar to the problem presented in \cite{zafer_calculus_2009} and, hence, the solution is also similar. The possible data departure curves with constant rate $r$, i.e., $D^{(m)}(t)=rt$, are divided in two sets. The first set $\mathbb{S}_{R_{max}}^{(m)}$ contains all the rates $r$ such that the associated data departure curve crosses the constraint $D_{max}^{(m)}(t)$ first. Whereas the set $\mathbb{S}_{R_{min}}^{(m)}$ contains all the rates $r$ such that the associated data departure curve crosses the constraint $D_{min}^{(m)}(t)$ first. Then, the rate of the following epoch is determined as the infimum of $\mathbb{S}_{R_{max}}^{(m)}$ or, equivalently, the supremum of $\mathbb{S}_{R_{min}}^{(m)}$, i.e., $r_m=\inf \left(\mathbb{S}_{R_{max}}^{(m)} \right) = \sup \left(\mathbb{S}_{R_{min}}^{(m)}\right)$
%	\begin{equation}
%	r_m=\inf \left(\mathbb{S}_{R_{max}} \right) = \sup \left(\mathbb{S}_{R_{min}}\right)
%	\label{}
%	\end{equation} 
	and the duration of the epoch $\ell_m$ can be obtained as the first time instant such that,
	$r_m \ell_m = D_{max}^{(m)}(\ell_m)$ or $r_m \ell_m = D_{min}^{(m)}(\ell_m)$.

	\setcounter{proofPart}{0}
	\vspace{-0.3cm}
	\subsection{Proof of the algorithm optimality}
	\label{A:Proof_Th_QOS}
	At each iteration, the algorithm checks whether Lemma \ref{L:Solution} is fulfilled. In such a case, there is no solution for the problem and the algorithm ends. Otherwise, the algorithm must satisfy Lemmas \ref{L:change_Zmax}-\ref{L:change_Zmax_Zmin_l} at each rate change and Lemma \ref{L:useAllEnergy} by the end of the transmission. 
	To show that these lemmas are satisfied and, hence, the algorithm computes the optimal data departure curve,
	we focus on the three different situations that can occur depending on the
	constraints of the problem: (i.) The algorithm finishes transmission by using an
	even power allocation among all bits to be transmitted. (ii.) $minT$ strategy is used
	to obtain the epoch. (iii.) The epoch is computed by using $minEnergy$ mode.
	In the following, the optimality of 
	% MP: Canvi
	%this situations 
	these cases
	%%%%%%%%%%%%%%%%
	is proved by showing
	that the
	% MP: Canvi
	%solution chosen by the algorithm 
	algorithm-chosen solution
	%%%%%%%%%%%%%%%%%%%%%%%%%%%%%%%%%
	satisfies the optimality conditions
	and that it is unique.
	
	\begin{proofPart}[Even power allocation]
	When this situation occurs, the algorithm ends transmission by transmitting at constant rate. Hence, Lemma \ref{L:NoEvents_QOS} is satisfied. No overflow is produced, thus, Lemma \ref{Result:NoOverflow} is satisfied. Lemmas from \ref{L:change_Zmax} to \ref{L:change_Zmax_Zmin_l} do not apply since there are no rate changes. Finally, note that Lemma \ref{L:useAllEnergy} is satisfied, since the last rate $r_{M-1}$ is obtained as the rate that allows the transmission of all the data by using all the available energy, moreover, from the properties of the function $g(\cdot)$ this rate is unique.
	
	%Note also that the solution is unique. Any departure curve that include a rate change would be suboptimal, since there exists a feasible constant rate transmission, and hence, Lemma \ref{L:NoEvents} would be broken. Considering data departure curves at constant rate, there are two options: (i.) $R$ such that $R < R^\star$ and (ii.) $R$ such that $R > R^\star$. It is easy to see that (i.) is suboptimal since the total completion time will be greater than $T^\star$. Regarding (ii.) since the rate is bigger the total completion time $T$ is smaller, however, note that the energy expenditure of the new curve $E(T) = g(D_{Tot}^{(m)}/T)T$ is greater than $E^\star(T^\star)$ since for a fixed amount of data, i.e., $D_{Tot}^{(m)}$, $E(t)$ is a decreasing function with $t$. This increase on the energy expenditure produces that the rates in (ii.) are not feasible, since $E^\star(T^\star)$ already spends all the available energy.
	
	\end{proofPart}
	
	\begin{proofPart}[minT mode]
	%When the algorithm starts working in $minT$ mode implies that there exist certain rates, which are contained in the interval $[R_{min}, R_{max}]$, at which is possible to complete transmission for some time T and that, at the same time, the even power allocation among all bits produces a rate \hat{R}^{(m)}$ that is not feasible due to $\hat{R}^{(m)} > R_{max}$. 
	%
	%In this situation, 
	
	We want to demonstrate that the optimal departure curve transmits at $R_{max}^{(m)}$ during a period of time of $z_{max}^{(m)}$. Let $\hat{T}^{(m)}$ be the total completion time that would be obtained if transmitting at rate $\hat{R}^{(m)}$ was feasible, hence, $\hat{T}^{(m)} = D_{Tot}^{(m)}/\hat{R}^{(m)}$. Similarly, $T_{max}^{(m)}= D_{Tot}^{(m)}/R_{max}^{(m)}$. Note that, since $R_{max}^{(m)}$ is a feasible rate, $T_{max}^{(m)}$ is an upper bound of the remaining completion time, $T^{(m)}$, whereas $\hat{T}^{(m)}$ is a lower bound, hence:	
	\begin{equation}
	\hat{T}^{(m)} <  T^{(m)} < T_{max}^{(m)}.
	\label{Eq:T_boundary}
	\end{equation}
	
	%We first show that all rates below $R_{max}$ are suboptimal. 
	Consider the data departure curve $D_1^{(m)}(t) = R_{max}^{(m)}t$. Note that any other data departure curve, $D_2^{(m)}(t)$, is suboptimal since, in order to satisfy \eqref{Eq:T_boundary}, $D_2^{(m)}(t)$ will cross $D_1^{(m)}(t)$ for some $t_y \in (0,T_{max}^{(m)})$. Hence, at $t_y$, both curves have sent the same amount of data, however, $D_1^{(m)}(t)$ has consumed less energy.
	%
	%and any other data departure curve $D_x(t) = R_xt$ with $R_x \in [0, R_{max}]$. Let $T_x$ denote the total completion time of $D_x(t)$. Note that in order to satisfy \eqref{Eq:T_boundary}, $D_x(t)$ must be equal to $D_1(t)$ for some $t_y \in (0,T_{max}^{(m)})$. It is easy to see that at $t_y$ both data departure curves have transmitted the same amount of data, however, $D_x(t)$ spends more energy. A similar approach can be followed by all the rates above $R_{max}$, a rate change will be required for some time instant in $(0,z_{max})$, and the new data departure curve will cross $D_1(t)$, and hence, it is suboptimal.
	
	%Therefore, we have proved that transmitting at $R_{max}$ is optimal. Moreover, in order to satisfy all the optimality conditions, 
	
	Now we must show that, at $z_{max}^{(m)}$, the rate increases. Note that by transmitting at $R_{max}^{(m)}$ instead of at $\hat{R}^{(m)}$, some energy has been saved. Then, in the following epoch, the available energy per bit is higher and, then, the new rate $\hat{R}^{(m+1)}$ obtained from an even power allocation among all bits in the following epoch fulfills $\hat{R}^{(m+1)} > \hat{R}^{(m)} > R_{max}^{(m)}$. Hence, we have proved that at $z_{max}^{(m)}$ a rate increase is produced.

	\end{proofPart}
	
	\begin{proofPart}[minEnergy mode]
	This mode is used when it is not possible to finish transmission at any rate. Note that the algorithm can select three different kind of points, denoted as $v_x = (\ell_m, D^{\star^{(m)}}(\ell_m))$, for ending the epoch depending on the constraints:
	
	\begin{itemize}
	\item $v_1 \: | \: D^{\star^{(m)}}(\ell_m)=D_{max}^{(m)}(\ell_m^-)$ where $\ell_m$ is either $d_i^{(m)}$ or $e_j^{(m)}$.
	\item $v_2 \: | \: D^{\star^{(m)}}(\ell_m)=D_{min}^{(m)}(\ell_m^+)$ where $\ell_m$ is either $q_k^{(m)}$ or $e_j^{(m)}$.
	\item $v_3 \: | \:D^{\star^{(m)}}(\ell_m)=D_{max}^{(m)}(\ell_m^-)$ where $\ell_m$ is $e_j^{(m)}$ and an overflow is produced.
\end{itemize} 

Note that Corollary \ref{C:Piecewise_QOS} and Lemma \ref{Result:NoOverflow} are satisfied for any of the selected points and that Lemma \ref{L:useAllEnergy} does not apply since transmission cannot be ended, yet. Hence, we have to prove the following three conditions: (i.) If a point such as $v_1$ is selected, there will be a rate increase (Lemma \ref{L:change_Zmax} or Lemma \ref{L:change_Zmax_Zmin_be} for the case that at $\ell_m$ two events are produced). (ii.) If a point such as $v_2$ is selected, there will be a rate decrease (Lemma \ref{L:change_Zmin_be} or Lemma \ref{L:change_Zmax_Zmin_be} for the case that at $\ell_m$ two events are produced). And (iii.), if a point such as $v_3$ is selected, the rate of the following epoch is zero as far as $\ell_3$ is not a data arrival event (Lemma \ref{L:change_overflow} and Lemma \ref{L:change_Zmax_Zmin_l}). 

Regarding (i.), the rate of the epoch is $r_m = \frac{D_{max}^{(m)}(\ell_m^-)}{\ell_m}$ that is the supremum of $\mathbb{S}_{R_{min}}^{(m)}$. Note that, at the following iteration, the set $\mathbb{S}^{(m + 1)}_{R_{min}}$ includes all the rates in the interval $(0 , r_m)$ and the rates contained in the interval $(r_m, r_m + \epsilon)$, for some $\epsilon > 0$. Hence, the rate of the following iteration, $r_{m+1}$, satisfies that $r_{m+1} \geq r_m$, therefore, a rate increase is produced. A similar approach can be done for (ii.), the rate of the epoch is  $r_m =\frac{D_{min}^{(m)}(\ell_m^+)}{\ell_m}$ that is the infimum of $\mathbb{S}_{R_{max}}^{(m)}$. At the following iteration the set $\mathbb{S}^{(m +1)}_{R_{max}}$ contains all the rates in $(r_m - \epsilon , \infty)$ for some $\epsilon > 0$. Hence, the rate of the following iteration, $r_{m+1}$, satisfies that $r_{m+1} \leq r_m$ and, therefore, there is a  rate decrease. Finally, in case (iii.), an overflow of the battery is produced. Note that the solution chosen by the algorithm satisfies that all the available data at $\ell_3^-$ has been transmitted. This implies that if non-data arrival is produce at $\ell_3$ the rate of the following epoch must be zero in order to satisfy data causality constraint. 

Therefore, the algorithm computes the optimal solution since it satisfies all the lemmas that model the behavior of the optimal solution.

%Regarding (i.), the rate of the epoch is $r_1 = D_1/l_1$ that is the supremum of $\mathbb{S}_{R_{min}}$. Note that at the following iteration the set $\mathbb{S}^{2}_{R_{min}}$, where the superindex stands for the new iteration, includes all the rates in the interval $(0 , r_1)$ and, some rates contained in the interval $(r_1, r_1 + \epsilon)$, for some $\epsilon > 0$. Hence, the rate of the following iteration, $r_2$, satisfies that $r_2 \geq r_1$, therefore, a rate increase is produced. A similar approach can be done for (ii.), the rate of the epoch is  $r_1 = D_2/l_2$ that is the infimum of $\mathbb{S}_{R_{max}}$. At the following iteration the set $\mathbb{S}^2_{R_{max}}$ contains all the rates in $(r_1 - \epsilon , \infty)$ for some $\epsilon > 0$. Hence, the rate of the following iteration, $r_2$, satisfies that $r_2 \leq r_1$ and, therefore, there is a  rate decrease. 

\end{proofPart}

\bibliographystyle{IEEEtran}
\small{
\bibliography{Journal_QoS}
}
\newpage

\pagebreak[4]
\begin{table}[H]
\centering
\begin{tabular}{|l|c|c|}
\hline
Definition & General notation & Notation at the $m$-th algorithm iteration \\
\hline
Instantaneous power 				& $P(t)$ 			& \\
Instantaneous rate 				& $r(t)$ 			&\\
Power-rate function 				&$g(\cdot)$ 		& \\
Transmission completion time  & $T$					& $T^{(m)} =  T -\tau_m$			\\
\hline
Data departure curve 			& \DDC 			& $ D^{(m)}(t)= D(t+ \tau_m) -  D^\star(\tau_m)$ \\
Optimal data departure curve 			& $D^\star(t)$ 			& $ D^{\star^{(m)}}(t) = D^\star(t+ \tau_m) -  D^\star(\tau_m) $ \\
Accumulated Data 					&\acData 			& $\acDataIt = \acData[t + \tau_m] -  D^\star(\tau_m) $\\
Minimum Data Departure 			&\minDDep 			& $\minDDepIt = \{\minDDep[t+\tau_m] -  D^\star(\tau_m) \}^+$ \\
\hline
Energy expenditure curve 		& $\EEC$			& $\EEC[(m)]= E(t+ \tau_m) -  E^\star(\tau_m)$\\
Optimal energy expenditure curve 		& $E^\star(t)$			& $ E^{\star^{(m)}}(t) = E^\star(t+ \tau_m) -  E^\star(\tau_m) $\\
Accumulated Battery 				&\acBatery[t]{t_x}		 	& $\acBateryIt = \acBatery[t + \tau_m]{\tau_m} - E^\star(\tau_m) $ \\
Minimum Energy Expenditure 	& \minEExp{t_x} 			& $\minEExpIt = \{ \minEExp[t+\tau_m]{\tau_m} - E^\star(\tau_m) \}^+$ \\
\hline
Actual mapping of $\acBatery[t]{\tau_m}$ to data domain 	& & $\bar D_{B_A}^{(m)}(t)$ \\
Effective mapping of $\acBatery[t]{\tau_m}$ to data domain 	& & $D_{B_A}^{(m)}(t)$ \\
Effective mapping of $\minEExp[t]{\tau_m}$ to data domain 	& & $D_{E_{min}}^{(m)}(t)$ \\
\hline
Equivalent upper bound on the data domain & & $D_{max}^{(m)}(t) = \min \{D_A^{(m)}(t), D_{B_A}^{(m)}(t) \}$ \\
Equivalent lower bound on the data domain & & $D_{min}^{(m)}(t) = \max \{D_{QoS}^{(m)}(t), D_{E_{min}}^{(m)}(t) \}$ \\
Discontinuities of the upper bound & & $\mathcal{Z}_{max}^{(m)}=\{ t \: | \: D_{max}^{(m)}(t^-) \neq D_{max}^{(m)}(t^+) \}$ \\ 
Discontinuities of the lower bound & & $\mathcal{Z}_{min}^{(m)}=\{ t \: | \: D_{min}^{(m)}(t^-) \neq D_{min}^{(m)}(t^+) \}$ \\
\hline
$m$-th epoch rate					&						& $r_m$			\\
$m$-th epoch length				&						& $\ell_m$			\\
Beginning $m$-th epoch			&						& $\tau_m$			\\
\hline
Data arrival time					& \DArrivalT		& $\DArrivalT[(m)]= \DArrivalT -\tau_m$			\\
\multirow{2}{*}{Amount of data in the packet}					& \multirow{2}{*}{\DArrival} 		& $D_0^{(m)} = \acData[\tau_m] - D^\star(\tau_m)$\\  
& & $D_i^{(m)}$ is a relabeling of $D_i$ for $d_i>\tau_m$  			\\
\hline
Energy arrival time				& \EArrivalT		& $\EArrivalT[(m)] = e_j - \tau_m$			\\
\multirow{2}{*}{Amount of energy in the packet}				& \multirow{2}{*}{\EArrival}		& $E_0^{(m)} = \acBatery[\tau_m]{\tau_m} - E^\star(\tau_m$)			\\
& & $E_j^{(m)}$ is a relabeling of $E_j$ for $e_j>\tau_m$  			\\
\hline
\ac{QoS} requirement arrival time	& \QArrivalT		& $\QArrivalT[(m)] = q_k -\tau_m$			\\
\multirow{2}{*}{Amount of data in the \ac{QoS} requirement} & \multirow{2}{*}{\QArrival}	&  $Q_0^{(m)} = 0$\\
& & \QArrival[(m)]  is a relabeling of $Q_k$ for $q_k>\tau_m$			\\
\hline
\end{tabular}
\caption{Summary of the paper notation.}
\label{tab:Notation}
\end{table}

\pagebreak[4]
\newcounter{temp}
	\setcounter{temp}{\value{algorithm}}
	\setcounter{algorithm}{0}
	\floatname{algorithm}{Function}

	\begin{algorithm}[H]
	\caption{checkFinish}\label{ALG:Finish}
	\begin{algorithmic}[0]
%	\small{
	
	%\Require $B_A(t)$, $D_{max}(t)$, $R_{min}$, $R_{max}$
	%\Ensure $finish$, $r$, $l$, $mode$
	
	\State $D_{Tot}^{(m)} = D_{max}^{(m)}(\infty)$
	\Comment Remaining data 
	%is obtained as the accumulated data after the last data arrival. 
	\If {$(getDataInCrossing(\: R_{max}^{(m)} t \: ,\: D_{max}^{(m)}(t)\:) < D_{Tot}^{(m)})$}
		\State \Return $mode = minEnergy$, $finish = 0$
		\Comment It is not possible to finish yet.
	\Else
		\For {$i=0:J^{(m)}-1$}
	      \Comment $J^{(m)}$ is the number of packets with $e_j > \tau_m$ plus one.
			\State $E = \sum_{j=0}^{i} E_j^{(m)}$
			\State  $\hat{T}^{(m)}_i$ is obtained by solving  $g^{-1}(E/\hat{T}^{(m)}_i) = D_{Tot}^{(m)}/\hat{T}^{(m)}_i$ 
			
			\State $\hat{R}^{(m)}=D_{Tot}^{(m)}/\hat{T}^{(m)}_i$
			\Comment Even power allocation among all bits
	%		\If {$D(t)= \hat{R}^{(m)}t$ is feasible}
			\If {$R_{min}^{(m)}\leq \hat{R}^{(m)} \leq R_{max}^{(m)}$}
				\Comment $D(t)= \hat{R}^{(m)}t$ is feasible in $[0,\hat{T}^{(m)}_i]$
				\State $\mathbb{S} = \{ E^{(m)}_j \: | \: j > i \:, \: e_j^{(m)} \in (0,\hat{T}^{(m)}_i) \} $
				\If {$\mathbb{S} = \emptyset$}
					\Comment The algorithm ends and the rate and length of the last epoch are returned.
					\State \Return $r= \hat{R}^{(m)}$, $\ell = \hat{T}^{(m)}_i$, $finish = 1$
				\EndIf
			\ElsIf {$(\hat{R}^{(m)} > R_{max}^{(m)})$}
				\State \Return $mode = minT$, $finish = 0$ 
			\EndIf 
		\EndFor
		\State \Return $mode = minEnergy$, $finish = 0$
	\EndIf
	
%	}
	\end{algorithmic}
	\end{algorithm}

\end{document}